\documentclass[10pt,journal]{IEEEtran} 
\IEEEoverridecommandlockouts
\usepackage{graphicx}
\usepackage{booktabs}
\usepackage{xcolor}

\usepackage{amsmath}
\usepackage{amssymb}
\usepackage{cite}
\usepackage{cuted}
\newcommand{\linebreakand}{%
  \end{@IEEEauthorhalign}
  \hfill\mbox{}\par
  \mbox{}\hfill\begin{@IEEEauthorhalign}
}    
\usepackage{bbm}
\usepackage{algorithm} 
\usepackage{algpseudocode}
\usepackage{caption}
\usepackage{steinmetz}
\usepackage{subcaption}
\usepackage{amsthm}
\usepackage[english]{babel}
\newtheorem{theorem}{Theorem}


\newcounter{relctr} %% <- counter for relations
\everydisplay\expandafter{\the\everydisplay\setcounter{relctr}{0}} %% <- reset every eq
\usepackage[figurename=Fig.]{caption}

\usepackage{xpatch}

\newtheoremstyle{remarkstyle}%
  {}%                       % Space above
  {}%                       % Space below
  {\itshape}%               % Body font
  {}%                       % Indent amount
  {\itshape}%              % Theorem head font
  {.}%                      % Punctuation after theorem head
  {.5em}%                   % Space after theorem head
  {}%                       % Theorem head spec (can be left empty, meaning `normal`)

\theoremstyle{remarkstyle}

\AtBeginDocument{} %% <- store original definition

\ifCLASSINFOpdf
\else
\fi
\begin{document}

\title{Sequential Monte Carlo for Resilient Networks: Assessment, Mitigation, and Generative Modeling
\thanks{The authors are with the University of Oulu, Finland. Emails: \{onel.alcarazlopez, amirhossein.azarbahram\}@oulu.fi. This work is supported in Finland by the Research Council of Finland (Grants 362782 and 369116 (6G Flagship)).}
}

\author{
Onel~L.~A.~L\'{o}pez, \emph{IEEE Senior Member} and Amirhossein~Azarbahram, \emph{IEEE Member}}

%\small	$^{*}$Centre for Wireless Communications (CWC), University of Oulu, Finland\\
%$^{\circ}$Department of Electrical and Electronics Engineering, Federal University of Santa Catarina, Florianopolis, Brazil \\
%$^{\dagger}$Department of Electronic Systems, Aalborg University, Denmark \\
%\small Emails: \{onel.alcarazlopez\}@oulu.fi, ??}

% The paper headers
% \markboth{Journal of \LaTeX\ Class Files,~Vol.~14, No.~8, March~2023}%
% {Shell \MakeLowercase{\textit{et al.}}: Bare Demo of IEEEtran.cls for IEEE Journals}

% make the title areane
\maketitle

\begin{abstract}

Resilience is becoming crucial for future wireless networks, which must withstand, adapt to, and recover from rare but potentially cascading disruptions. This paper develops a sequential Monte Carlo (SMC) simulation framework for such systems, in which resilience failures are formulated as path-dependent rare events arising from staged degradation and delayed recovery, and are decomposed into semantically interpretable levels defined by a reaction coordinate. 
Building on this structure, we present a fixed-level splitting approach with budget-aware population control, enabling efficient estimation of rare non-recovery probabilities. 
We discuss the potential reuse of SMC checkpoints as representative near-critical states for policy evaluation and simulation-based selection. 
We further extend the methodology to learned stochastic simulation by using generative sequence models as restartable surrogates within data-driven digital twins. 
We showcase the framework in a delay-critical wireless network use case, where SMC substantially improves over standard Monte Carlo in rare-event regimes with both physical and learned simulators. 
\end{abstract}

% Note that keywords are not normally used for peerreview papers.
\begin{IEEEkeywords}
assessment and control, generative models, path-dependent failure, resilience, sequential Monte Carlo
\end{IEEEkeywords}

\IEEEpeerreviewmaketitle

\section{Introduction}\label{sec:intro}

\IEEEPARstart{R}{esilience} is becoming a first-class requirement for the next generation of wireless communication systems \cite{Rak.2020,Matthiesen.2025,Alves.2025}.
Indeed, such future systems must operate as critical infrastructures, tightly coupled with other societal systems, where failures may propagate across domains and time scales, and must incorporate capabilities such as graceful degradation, rapid reconfiguration, and learning-driven recovery \cite{Alves.2025}. These capabilities naturally define a sequence of system operating stages, from nominal operation through degradation and critical regimes to eventual failure and recovery, as illustrated in Fig.~\ref{fig:stages}a, which must be explicitly accounted for to enable rigorous analysis and control.

\begin{figure}
    \centering
    \includegraphics[width=\linewidth]{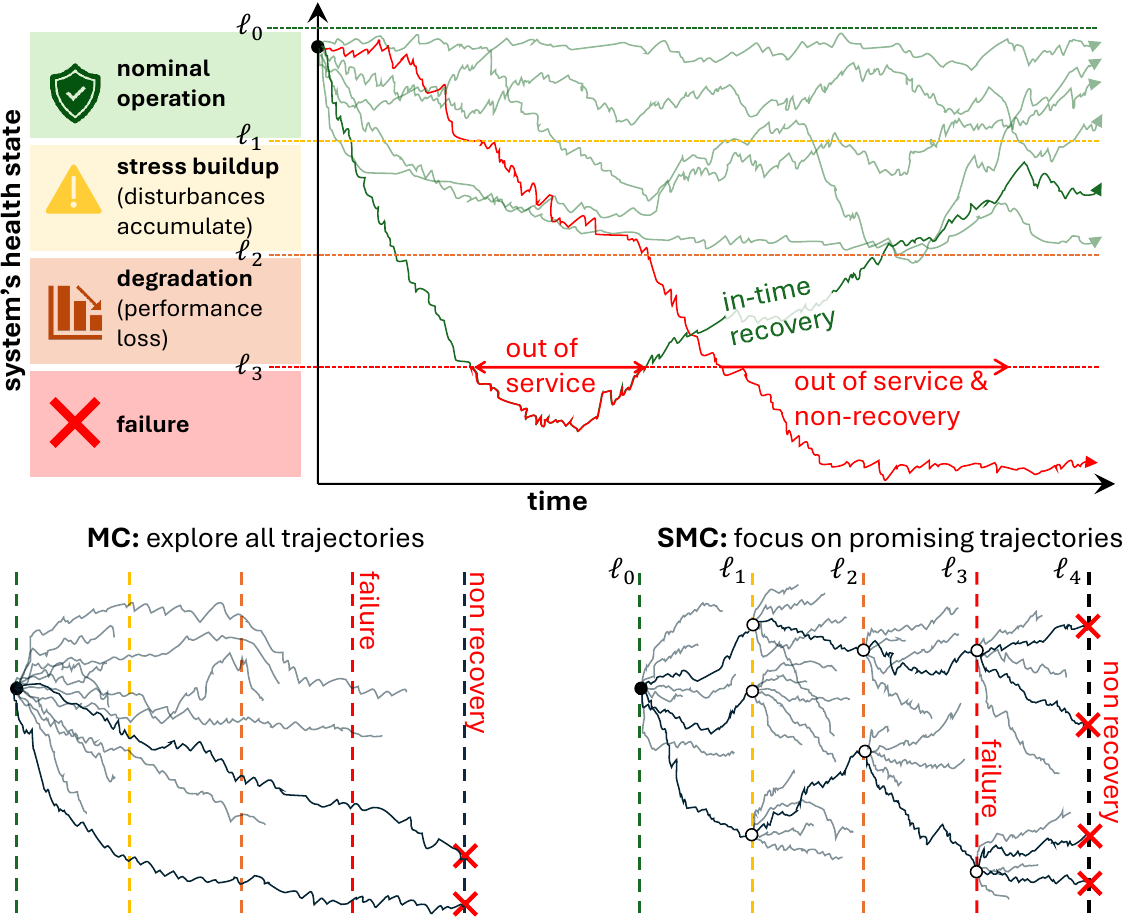}
    \caption{Illustration of system dynamics and trajectory sampling. (a) System evolution over time across health states, with some trajectories reaching the failure set, after which some recover and some do not (top). (b) Comparison of MC and SMC sampling strategies (bottom). Note that MC explores many full trajectories, with few reaching failure, whereas SMC concentrates effort on trajectories that progressively cross intermediate levels, replicating promising paths and efficiently identifying rare failure events.}
    \label{fig:stages}
    \vspace{-6mm}
\end{figure}

Simulation is crucial for resilience analysis and control, as it enables the exploration of complex system dynamics and the anticipation of failure regimes before they occur in practice. 
%This can run on digital twins (DTs), which are inherently data-driven (i.e., periodically updated through measurements from the physical system). Indeed, the collected data can be leveraged to learn stochastic simulation models and sampling of system dynamics.
This can run on data-driven digital twins (DTs), using measurements from the physical system to update stochastic models that randomly generate plausible system states.
Generative artificial intelligence (genAI) has recently emerged as a promising tool for this \cite{GenDT_network,DT_surrogate}. From a simulation perspective, the role of genAI is to learn a sampler for the stochastic evolution of system states, generating new state sequences without access to the physical system or relying on an exact analytical model. Thus, the learned model acts as a stochastic simulator, producing multiple plausible realizations of the system dynamics. For this task, variational autoencoders generate samples through latent-variable representations \cite{kingma2013auto}, normalizing flows construct invertible maps from simple base distributions to complex target distributions \cite{normalzing_flow_main}, generative adversarial networks learn a sampler through an adversarial game between a generator and a discriminator \cite{yoon2019time}, and diffusion models generate samples through iterative denoising from noise \cite{ho2020denoising}. 
%By leveraging these principles, DTs can act as learned stochastic simulators capable of producing multiple plausible network evolutions from the same initial condition.

Blindly sampling either a ground-truth simulator or a genAI surrogate in a DT might be an overkill in resilience studies. The reason is that classical Monte Carlo (MC) is prohibitively inefficient in capturing rare but impactful events \cite{Rubino.2009,Lopez.2023}, particularly those arising from sequential degradation or cascading failures \cite{Alves.2025}.
Indeed, efficient estimation requires methodologies that explicitly exploit the sequential structure of the progression toward failure, enabling targeted exploration of critical system trajectories. An important one is sequential MC (SMC) \cite{Lopez.2026}. 

SMC constructs intermediate probability distributions and approximates them using interacting particle systems, where trajectories are replicated as they approach increasingly critical regions of the state space \cite{Moral.2006,Rubino.2009,Smith.2013}. There are different SMC methods that leverage, e.g., importance splitting \cite{Villen.1991,Garvels.2000,Rubino.2009,Garvels.2002}, subset simulation \cite{Lopez.2023,Bect.2017}, and adaptive multilevel splitting \cite{Cerou.2007}. They all build on the fact that rare failures are a consequence of progressive transitions across levels and demonstrate substantial efficiency gains over standard MC. These aspects certainly make SMC particularly well-suited for systems exhibiting staged degradation and resilience dynamics. Fig.~\ref{fig:stages}b illustrates the sampling mechanisms of both MC and SMC side by side.
\vspace{-2mm}
\subsection{State-of-the-Art and Motivation}
Despite SMC's maturity, its application to network resilience studies remains largely limited, with only our recent conference paper \cite{Lopez.2026} initially exploring this explicitly, while other related works focus more on reliability/robustness assessment, e.g., \cite{Budde.2020,Lopez.2023}.\footnote{Notably, most rare-event simulation frameworks for wireless networks use non-sequential variants instead of SMC, e.g., \cite{Yu.2020,Ke.2023,Yu.2025}. Such approaches, although suitable for ultra-reliable low-latency communication systems, which is indeed their focus, do not fit well with the time/stage dynamics that must be accounted for in network resilience studies.} 
However, resilience involves a broader and more dynamic perspective, encompassing not only failure occurrences but also the evolution of system states across multiple regimes, including degradation, adaptation, and recovery. This highlights a gap between classical rare-event simulation methodologies and the requirements of resilience-aware system design. 

Meanwhile, genAI has recently been used in 
\cite{gibson2023flow_rare,gao2023rare,DASGUPTA2024109729_rare,NEURIPS2024_cfce7278_rare,STAMATELOPOULOS2025117589_rare,pandey2024heavy_rare} to improve rare-event simulations rather than just for synthetic data generation or forecasting.
This is mainly done by biasing samples toward rare regions, improving the representation of extreme events. For instance, normalizing flows have been used to induce bias through importance weights \cite{gibson2023flow_rare,gao2023rare,DASGUPTA2024109729_rare}, and unsupervised normalizing flows to generate non-local MC proposals for rare-event sampling \cite{Asghar2024_rare}. Moreover, diffusion-based models have been studied for tail-aware time-series generation and for assessing their ability to reproduce tail statistics \cite{NEURIPS2024_cfce7278_rare,STAMATELOPOULOS2025117589_rare,pandey2024heavy_rare}. 
%These works show that generative models can support rare-event analysis either by reshaping the sampling process toward informative low-probability regions or by improving the representation of tail trajectories. 
However, these studies do not directly address the path-dependent sampling structure required for resilience assessment, in which not only the faulty regime is of interest but also how the system evolves through degradation, adaptation, persistence, and recovery over time. Therefore, genAI in DTs for resilience should not only bias samples toward the rare-event set or reproduce tail statistics but also generate stochastic trajectory continuations from intermediate critical states. This motivates coupling genAI with SMC, where the learned model supplies plausible future continuations from different checkpoints, while SMC provides the level-wise rare-event exploration mechanism.

\subsection{Contributions}
Here, we build upon our preliminary conference work in \cite{Lopez.2026}, which 
introduces SMC for resilience assessment and control in wireless networks, to provide a more formal, detailed, and comprehensive treatment, including SMC estimator design, computational budgeting, mitigation-policy evaluation, and the use of learned generative surrogates within data-driven DTs. The specific contributions are fourfold:
\begin{itemize}
    \item We formalize an SMC-based framework for resilience assessment and control. Specifically, we formulate resilience assessment as a path-dependent rare-event estimation problem and show how fixed-level SMC can exploit the staged nature of degradation, critical service affectation, recovery, and non-recovery. Beyond \cite{Lopez.2026}, we formalize a budget-aware population-control mechanism that preserves the interpretability of semantically defined resilience levels while allocating effort to difficult transitions and by providing bias and variance scaling laws for the resulting estimator.

    \item We extend the framework toward data-driven DTs through generative sequence modeling, which is not considered in~\cite{Lopez.2026}. In this setting, the generative model acts as a restartable stochastic surrogate that can produce plausible future continuations from intermediate degraded states. This is different from using genAI only for nominal forecasting, synthetic data generation, or tail-statistics reproduction, since SMC requires conditional branching from checkpoints along the degradation path. We further discuss modeling choices for such surrogates, including single conditional generators, regime-specific generators, and mixture-of-experts architectures, together with criticality-aware data representation to improve learning of degradation, persistence, recovery, and near-failure transitions.

    \item We validate the proposed framework through a delay-critical wireless-network use case, illustrating resilience failure probability estimation, SMC-assisted service reconfiguration, and diffusion-based surrogate simulation. The numerical results show the efficiency gains of SMC over naive MC in rare non-recovery regimes, demonstrate how checkpoint-based policy selection can support resilience-aware reconfiguration, and provide evidence that a learned diffusion DT can be coupled with SMC to estimate rare resilience failures when the physical simulator is replaced by a generative surrogate.
\end{itemize}
%
%\textcolor{red}{organization, notations?}

% \textit{\textbf{Organization:}} The remainder of this paper is organized as follows. Section~\ref{sec:background} presents the preliminaries, including the generic dynamic time-evolving system, the SMC mechanism, and the associated estimation accuracy and simulation cost. Then, Section~\ref{sec:Res} discusses the use of SMC for resilience assessment and control, including simulation design and mitigation policies. Section~\ref{sec:AI} presents generative surrogate modeling for stochastic restartable simulation in data-driven DTs, together with criticality-aware data representation. Finally, Section~\ref{sec:system} provides a delay-critical wireless-network case study and numerical results demonstrating the performance of SMC and diffusion-based generative simulation, while Section~\ref{sec:conclude} concludes the paper.

\textit{\textbf{Organization:}} The remainder of this paper is organized as follows. Section~\ref{sec:background} introduces SMC within dynamical systems, which is then applied to resilience assessment and control in Section~\ref{sec:Res}. Section~\ref{sec:AI} 
extends the discussions to data-driven simulators for DTs. Section~\ref{sec:system} provides a delay-critical wireless-network case study and numerical results. Finally, Section~\ref{sec:conclude} concludes the paper.

\textit{\textbf{Notation:}} Bold lower/upper-case letters denote vectors/matrices. $\mathcal{O}(\cdot)$ denotes big-O complexity. $\max\{\cdot\}$ and $\min\{\cdot\}$ are maximum and minimum operators, respectively. $\mathbb{E}[\cdot]$ and $\text{Var}[\cdot]$ are respectively expectation and variance operators, while $\mathbb{E}'[\cdot]$ and $\text{Var}'[\cdot]$ are corresponding relative measures.
$\Pr(\cdot)$ and $\Pr(\cdot|\cdot)$ are probability and conditional probability operators, while $p(\cdot|\cdot)$ denotes a probability conditional law/function.
$\mathcal{N}(\boldsymbol{0},\mathbf{R})$ denotes a Gaussian distribution with mean $\boldsymbol{0}$ and covariance  $\mathbf{R}$. $\lfloor\cdot\rfloor$ and $\lceil\cdot\rceil$ are respectively floor and ceiling operators. $|\mathcal{A}|$ is the cardinality of the set $\mathcal{A}$.

\section{Preliminaries}\label{sec:background}
Consider a stochastic dynamical system $X(t)\in\mathcal{X}$, where $X(t)$ denotes the system state at time $t$ and $\mathcal{X}$ is a high-dimensional state space. Let $\mathcal{F}\subset \mathcal{X}$ denote a \emph{fault (or failure) set}, representing undesirable system configurations such as persistent service degradation or loss of recoverability. Then, define the failure event
\begin{align}
    \xi\triangleq \{\exists t\le T: X(t)\in\mathcal{F}\},\label{xi}
\end{align}
where $T$ is a finite observation horizon. Event $\xi$ corresponds to a hitting event, i.e., the system trajectory enters the fault set at least once during $[0,T]$. Importantly, $\xi$ is a path-dependent event, as it depends on the evolution of $X(t)$ over time rather than on its terminal state alone.

A fundamental task in resilience analysis is to estimate $p\triangleq \Pr(\xi)$, which quantifies the likelihood that the system ever transitions into an unacceptable or non-recoverable regime within the operational horizon. For this, we assume the availability of a stochastic simulator capable of generating sample paths of the system state. Specifically, given an initial condition $X(0)$, the simulator produces a realization
\begin{align}
    X^{(i)}(t)=\text{Sim}(X(0),\omega^{(i)}),\ t\in\{0,T\}, \label{eq:sim}
\end{align}
where $\omega^{(i)}$ denotes a realization of all exogenous randomness driving the system dynamics. Each call to the simulator with an independent $\omega^{(i)}$ yields an independent trajectory $X^{(i)}(t)$. 
The n\"aive MC approach to estimate $p$ lies in generating $N$ full independent trajectories and counting the fraction that enter the failure set within the observation horizon. This estimator is inefficient in the rare-event regime, since its relative variance scales as $(1-p)/(pN)$ and therefore the required number of full-trajectory simulations grows as $\mathcal{O}(1/p)$ when $p \ll 1$ \cite{Lopez.2023}. Instead, 
SMC addresses rare-event estimation by sequentially decomposing $\xi$ into a sequence of less rare conditional events, thereby avoiding the need to observe in full the evolution towards $\xi$ \cite{Cerou.2007,Lopez.2023}. 

\subsection{SMC Mechanism}
Let us introduce a reaction coordinate $g: \mathcal{X}\rightarrow \mathbb{R}$, which quantifies the system's progression toward the rare event set $\mathcal{F}$. Now, define an increasing sequence of thresholds (levels) $\ell_0 < \cdots<\ell_K$, with associated nested sets
\begin{align}
    \mathcal{L}_k \triangleq \{x\in \mathcal{X}: g(x)\ge \ell_k \},
\end{align}
such that $\ell_0=-\infty$, hence $\mathcal{L}_0=\mathcal{X}$, and $\mathcal{L}_K\subseteq \mathcal{F}$. The rare event probability can then be factorized as
\begin{align}
    \Pr(\xi)= \prod_{k=0}^{K-1}p_k,\label{pre}
    %\Pr(\mathcal{L}_1)\prod_{k=1}^{K-1}\Pr(\mathcal{L}_{k+1}|\mathcal{L}_k),
\end{align}
where $p_k\triangleq \Pr(\xi_{k+1}|\xi_k)$, 
%=\Pr(\mathcal{L}_{k+1}|\mathcal{L}_k)
$\xi_k\triangleq \{\exists t\le T: X(t)\in \mathcal{L}_k\}$ (and note that $p_0=\Pr(\xi_1)$). The aim is to estimate each factor $p_k$, and then recover $\Pr(\xi)$ via multiplication.

%SMC requires simulators to be restartable with independent continuation \cite{Villen.1991,Garvels.2000,Moral.2006}, i.e., able to resume from an intermediate state using fresh randomness. This holds for most event-driven and time-stepped simulators.
%In addition to generating independent trajectories, 
SMC requires the simulator to satisfy two mild properties \cite{Villen.1991,Garvels.2000}: i) restartability, meaning that the simulator can be paused at an intermediate time $t^\star$ and resumed from the current simulator state $X(t^\star)$; and ii) independent continuation, meaning that, upon restart, future system evolution can be generated using fresh realizations of the underlying randomness, independently across restarts. These properties are satisfied by the vast majority of event-driven and time-stepped simulators used in communication network studies. 
Mathematically, this is captured by the following, possibly implicit, update rule
\begin{align}
    X(t^\star+\Delta)=\text{Sim}(X(t^\star),W(t^\star)),\label{eq:sim2}
\end{align}
where $\Delta$ is a simulation time step and $W(t^\star)$ is a random input capturing all exogenous, independent across time and simulator runs, randomness over the interval $[t,t+\Delta]$, such that it can be generated independently of the past.

There are several SMC variants in the literature, differing in how levels are selected and how trajectories are replicated \cite{Villen.1991,Garvels.2000,Moral.2006,Rubino.2009,Smith.2013,Cerou.2007}. 
In the simplest form, referred to as fixed-level splitting, the thresholds $\{\ell_k\}$ are specified \emph{a priori}, and trajectories are split whenever a new level is reached, while adaptive multilevel splitting (AMS) dynamically determines levels based on the order statistics of sampled trajectories \cite{Cerou.2007,Brehier.2015}, typically retaining a fixed fraction of the most advanced trajectories at each iteration. Meanwhile, subset simulation combines Markov Chain MC with conditional sampling to estimate each stage's probability \cite{Au.2001,Lopez.2023}, and weighted splitting and sequential MC samplers let particles carry importance weights and perform resampling based on them \cite{Smith.2013}. 
Herein, we focus only on a fixed-level splitting approach, not only for simplicity but also because it is appealing for resilience studies, as discussed later in Section~\ref{sec:Res}. 
%More adaptive splitting strategies are left for future work.

%\subsection{Fixed-effort Multilevel Splitting}
%\subsubsection{SMC pseudocode}
Let $M$ denote the number of particles (trajectories) maintained at each level and $\xi_k^{(i)}\triangleq \{\exists t\le T: X^{(i)}(t)\in \mathcal{L}_k\}$. The algorithm is as follows:
\begin{itemize}
    \item \textbf{initialization:} generate $M$ 
    %i.i.d. trajectories $X^{(i)}(t)$ from the nominal initial distribution;
    i.i.d. initial simulator states $X^{(i)}(0)$ from the nominal initial-state distribution (or a deterministic initial condition);
    \item \textbf{level-by-level evolution:} for level $k=0,\cdots,K-1$, simulate each particle $i$ until it either reaches $\mathcal{L}_{k+1}$ (i.e., $\xi_k^{(i)}$ is non-empty) or $T$ (i.e., $\xi_k^{(i)}$ is empty). Upon hitting $\mathcal{L}_{k+1}$, store a checkpoint including all simulator state required for restarting in the subsequent splitting step;
    \item \textbf{selection and splitting:} let $S_{k}$ be the number of trajectories that reach $\mathcal{L}_{k+1}$ at level $k$. 
    Retain them and resample with replacement from their checkpoints to restore the population to $M$. 
    %Each clone is continued with independent randomness conditional on its checkpoint.
    Each resampled checkpoint initializes a new simulator instance, which is continued independently using fresh random input realizations.
    %These trajectories are retained, and $M-M_{k+1}$ new trajectories are generated by cloning the retained ones (with replacement), restarting from their stored hitting states, and evolving them with independent randomness.
    \item \textbf{estimation:} $\hat{p}_k=S_{k}/M$, and use \eqref{pre} such that
    \begin{align}
        \hat{p}_{\rm smc}=\prod_{k=0}^{K-1}\frac{S_{k}}{M},\label{smc}
    \end{align}
    which is unbiased under ideal i.i.d. sampling.
\end{itemize}

%SMC estimator \eqref{smc} is unbiased under i.i.d. sampling, while its 
\subsection{Accuracy and cost} The estimation variance depends primarily on the variance of the stage-wise conditional probabilities $p_k$. In fact, for moderate $\{p_k\}$ and  sufficiently large $M$, a standard, engineering approximation (ignoring inter-stage dependence due to resampling) for the relative variance yields \cite{Bect.2017}
\begin{align}
    \text{Var}'[\hat{p}_{\rm smc}] \approx \sum_{k=0}^{K-1}\frac{1-p_k}{p_k M}.\label{eq:var}
\end{align}
Meanwhile, the computational cost per level, in terms of the number of base-simulator time steps executed, is $C_k= \sum_{i=1}^{M}Q_{k,i}$, where $Q_{i,k}$ is the number of steps the particle $i$ is propagated while attempting to go from level $k$ to $k+1$. The total cost is then given by
\begin{align}
C_{\rm smc}=\sum_{k=0}^{K-1}C_k, \label{eq:Csmc}
\end{align}
Small values of $p_k$ lead to high variance, whereas large values call for more levels (and cost) and reduce interpretability.

\section{SMC for Resilience Assessment \& Control}\label{sec:Res}
Resilience in communication networks is fundamentally a trajectory-level property, as it reflects, or should reflect, not only whether service degradation occurs but also how the system evolves afterward, including its ability to recover within acceptable time and performance bounds \cite{Alves.2025}. As a result, resilience violations can be modeled as path-dependent rare events, rather than terminal-state failures. Specifically, a ``resilience'' failure event can be defined as in \eqref{xi}, where $\mathcal{F}$ denotes a set of unacceptable system states and may be time-dependent, e.g., encoding persistence or recovery-deadline constraints.
Such events are typically rare under nominal conditions but have a disproportionate impact when they occur, making their accurate estimation and control both challenging and essential. All this motivates the use of splitting-based techniques that explicitly exploit the progressive nature of resilience degradation. 
%Herein, we discuss how the basic framework from the previous section can be exploited for this.
%A ``resilience'' failure event can be defined using \eqref{xi} by making $\mathcal{F}$ to encode persistence and/or recovery-deadline constraints.
%capture delayed recovery, cascading degradation, and/or deadline-constrained service restoration.
%\footnote{The framework readily extends to time-dependent fault sets $\mathcal{F}(t)$.}
%Since such events are rare and path-dependent, splitting-based techniques that exploit progressive degradation are particularly suitable.
%This, and the fact that such events are typically rare under nominal conditions, motivate the use of splitting-based techniques that explicitly exploit the progressive nature of resilience degradation. 
%Herein, we discuss how the basic framework from Section~\ref{sec:smc} can be exploited for this.
%
\subsection{Simulation Design}
Key elements in SMC design/implementation are the definition of $g(\cdot)$, selection of hitting levels \cite{Garvels.2002,Cerou.2007}, and population control, as we discuss in the following. 
\subsubsection{Reaction coordinate} $g(\cdot)$ quantifies the system's progression toward the rare event set. 
In resilience assessment, this choice is particularly critical, as it determines not only estimator efficiency but also the interpretability of intermediate system states. 
Such a function must satisfy three key design principles:
\begin{itemize}
    \item expected monotonicity toward failure, such that 
    %trajectories that are more degraded or closer to non-recovery should, in expectation, exhibit larger function values;
    states closer to non-recovery should, on average, yield larger values of the coordinate \cite{Garvels.2002,Cerou.2007};\footnote{The reaction coordinate is not required to be strictly monotone along every realization, as its role is to induce a partial ordering of system states that enables a meaningful decomposition of the rare event into a sequence of less rare intermediate transitions.}
    \item early-warning capability, such that the coordinate should detect incipient degradation before irreversible failure occurs; and 
    \item operational interpretability, such that the coordinate should correspond to meaningful service-level or control-relevant quantities.
\end{itemize}

Note that $g(\cdot)$ is evaluated on the instantaneous simulator state $X(t)$, possibly including augmented state variables that encode temporal context. 
Natural reaction coordinate candidates in communication networks include backlog/delay levels,  accumulated stress or impairment measures, and recovery slack variables that quantify the remaining time before a recovery deadline is violated. Hybrid coordinates that combine state and temporal information are particularly useful when resilience violations are defined through persistence conditions rather than instantaneous thresholds.

\subsubsection{Hitting levels}
Aligning fixed levels with resilience phases (e.g., nominal operation, degradation, service level agreement violation, and non-recovery), as illustrated in Fig.~\ref{fig:stages},  promotes interpretability. Then, the corresponding $p_k$ quantifies stage-wise vulnerability, i.e., the likelihood that the system progresses from
one resilience phase to the next.
This framework enables an explicit identification of near-critical regimes, facilitating the systematic evaluation of reconfiguration and recovery policies from comparable system states, as discussed in Section~\ref{sec:reconf}. Indeed, it is particularly valuable when the goal is not only to estimate rare-event probabilities but also to understand how failures unfold and implement conscious mitigation and control strategies.

In general, the choice of level spacing/sets involves a trade-off between statistical efficiency and interpretability. This is because finely spaced levels reduce estimator variance but increase algorithmic complexity and may obscure the physical meaning of intermediate transitions, while overly coarse levels improve clarity but can lead to high variance or particle extinction if conditional transition probabilities are too small \cite{Garvels.2002,Cerou.2007}. In resilience-oriented studies, moderate conditional probabilities combined with meaningful phase boundaries may be preferable to aggressive variance minimization.

\subsubsection{Population control}\label{sec:PC}
Because the levels $\{\ell_k\}$ are defined a priori based on semantic notions of service degradation and persistence, the conditional probabilities $\{p_k\}$ may differ significantly. 
This can lead to particle extinction at difficult levels or inefficient computational effort allocation.
To address this, we adopt a budget-adaptive sampling strategy in which the total available computational cost $C_T$ is fixed a priori, and simulation effort is dynamically allocated across levels. 

\begin{algorithm}[t!]
\caption{Fixed-level SMC with Population Control}
\label{alg:budget_smc}
\begin{algorithmic}[1]
\Require Levels $\{\ell_k\}$, total budget $C_T$,
success target $S_{\rm tar}$, attempt target $A_{\rm tar}$,
initial pool size $M_0$, bounds $M_{m},M_{M}$,
safety factor $\varsigma$, probability floor $p_{\min}$

\State Initialize state pool $\mathcal P_0=\{x_0^{(1)},\ldots,x_0^{(M_0)}\}$
\State Set $C_{\rm used}\leftarrow 0$
\For{$k=0,\ldots,K-1$}
    \State Set $A_k\leftarrow 0$, $S_k\leftarrow 0$, $\mathcal S_k\leftarrow \emptyset$
    \State Set target level $L_{k+1}=\{x:g(x)\ge \ell_{k+1}\}$

    \While{$C_{\rm used}\!<\!C_T$ \textbf{and} $\left(S_k\!<\!S_{\rm tar}\ \textbf{or}\ A_k\!<\!A_{\rm tar}\right)$}
        \State Sample a checkpoint $x$ from $\mathcal P_k$ with replacement
        \State Propagate $x$ with fresh randomness until i) $X(t)\in L_{k+1}$, ii) the terminal horizon is reached, or iii) the absorbing failure event $\xi$ occurs
        \State Update $C_{\rm used}$ accoring to \eqref{eq:Cused}
        \State $A_k\leftarrow A_k+1$
        \If{the trajectory reached $L_{k+1}$}
            \State Store the first-hitting checkpoint in $\mathcal S_k$
            \State $S_k\leftarrow S_k+1$
        \EndIf
    \EndWhile

    \If{$A_k=0$ \textbf{or} $S_k=0$}
        \State \Return $\hat p_{\rm smc}=0$; \textbf{break}
    \EndIf

    \State Estimate $\hat p_k$ according to \eqref{eq:pkSA}

    \If{$k<K-1$}
        \State Choose next pool size according to \eqref{eq:Mk}
        \State Resample $M_{k+1}$ checkpoints with replacement from $\mathcal S_k$
        \State Set the resulting population as $\mathcal P_{k+1}$
    \EndIf

    \If{$C_{\rm used}=C_T$ \textbf{and} $k<K-1$}
        \State $\hat p_{\rm smc}=0$ (insufficient budget); \textbf{break}
    \EndIf
\EndFor

\State \Return $\hat p_{\rm smc}$ according to \eqref{eq:psmc}
\end{algorithmic}
\end{algorithm}

Let $S_{\rm tar}$ and $A_{\rm tar}$ denote the minimum number of successful crossings and attempted continuations required per level, respectively. Starting from an input pool of simulator states at level $\mathcal{L}_k$, trajectory continuations are generated sequentially with fresh randomness until they either reach $\mathcal{L}_{k+1}$, terminate at the horizon, hit the failure event, or the global budget is exhausted. 
 At level $k$, we record the number of attempts $A_k$, the number of successful first crossings $S_k$, and the corresponding cost \eqref{eq:Csmc} but now with $C_k=\sum_{i=1}^{A_k}Q_{k,i}$.
 Sampling at level $k$ stops when
%\begin{align}
    $S_k\ge S_{\rm tar}$
    and
    $A_k\ge A_{\rm tar}$,
%\end{align}
unless the global budget constraint
\begin{align}
    C_{\rm used}\triangleq \sum_{r=0}^{k} C_r \le C_T \label{eq:Cused}
\end{align}
reaches the limit earlier.
The empirical transition probability at level $k$ is then estimated as
\begin{align}
    \hat p_k=\frac{S_k}{A_k}.\label{eq:pkSA}
\end{align}
The estimator must return zero if $S_k=0$ or $A_k=0$, indicating that the next level was not reached under the available budget. Otherwise, the successful first-hitting checkpoints are resampled with replacement to construct the input pool for the next level. We choose the size adaptively as
\begin{align}
    M_{k+1}
   \! =\!
    \min\!\left\{\!
    M_{M},
    \max\!\left\{\!
    M_{m},
    \left\lceil
    \frac{\varsigma S_{\rm tar}}{\max(\hat p_k,p_{\min})}
    \right\rceil
    \right\}
    \right\},\label{eq:Mk}
\end{align}
where $\varsigma\ge 1$ is a safety factor, $p_{\min}>0$ prevents excessive population growth when $\hat p_k$ is very small, and $M_{m}$ and $M_{M}$ impose practical lower and upper bounds on the pool size. Thus, difficult transitions with small $\hat p_k$ induce larger subsequent pools, while easy transitions require fewer particles.
%Specifically, trajectory continuations are generated sequentially at each level $\mathcal{L}_k$ from the current state pool until both the number of trajectories $S_k$ that successfully reach the next level $\mathcal{L}_{k+1}$ and the total number of attempts $A_k$ are collected. That is, until conditions $S_k\ge S_{\rm tar}$ and $A_k\ge A_{\rm tar}$ are simultaneously met or the global budget is exhausted earlier.
The final rare-event probability estimate retains the multiplicative splitting structure of \eqref{pre} as
\begin{align}
    \hat{p}_{\rm smc}
    =
    \prod_{k=0}^{K-1}\hat p_k
    =
    \prod_{k=0}^{K-1}\frac{S_k}{A_k}. \label{eq:psmc}
\end{align}
The overall procedure is presented in Alg.~\ref{alg:budget_smc}.

The procedure introduces some bias due to outcome-dependent stopping, as in other adaptive SMC schemes (see discussions in \cite{Moral.2006,Bect.2017}). The following result formalizes this and provides some accuracy scaling laws. 

\begin{theorem}
\label{thm:bias_var}
Assume that the stopping is dominated by the success threshold $S_{\rm tar}$ (i.e., $A_{\rm tar}$ is not active), then the relative bias and variance of the stage estimator scale as
\begin{align}
    \mathbb{E}'[\hat p_k]\triangleq \frac{\mathbb{E}[\hat p_k]-p_k}{p_k}
    \approx
    \frac{1-p_k}{S_{\rm tar}},
    \ \ 
    %\frac{\mathrm{Var}[\hat p_k]}{p_k^2}
    \mathrm{Var}'[\hat p_k]\approx
    \frac{1-p_k}{S_{\rm tar}}.\label{eq:EV}
\end{align}
\end{theorem}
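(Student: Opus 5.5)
When only the success threshold is active, level $k$ stops at the first attempt that produces the $S_{\rm tar}$-th success. Write $s\triangleq S_{\rm tar}$ and $p\triangleq p_k$. Conditionally on the input pool $\mathcal P_k$, the attempts are Bernoulli trials with success probability $p$. Under the idealization of i.i.d. continuations used for \eqref{eq:var}, these trials are independent. Hence $A_k$ follows a negative binomial law: it counts the trials needed to obtain $s$ successes. Since $S_k=s$ deterministically, $\hat p_k=s/A_k$, and the task reduces to expanding the first two moments of $1/A_k$ for large $s$. I would also ignore budget truncation, i.e., assume $C_T$ is not reached.

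\textbf{Step 1: exact moments of $A_k$.} We have
\begin{align}
\mathbb{E}[A_k]=\frac{s}{p},\qquad \mathrm{Var}[A_k]=\frac{s(1-p)}{p^2}.
\end{align}
Writing $A_k=\frac{s}{p}(1+\delta)$, this gives $\mathbb{E}[\delta]=0$ and $\mathbb{E}[\delta^2]=\frac{1-p}{s}$. Higher central moments of $A_k$ are $\mathcal O(s)$ as well, because $A_k$ is a sum of $s$ i.i.d. geometric variables. Consequently $\mathbb{E}[\delta^3]=\mathcal O(s^{-2})$.

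\textbf{Step 2: delta-method expansion.} I would expand
\begin{align}
\hat p_k=\frac{s}{A_k}=\frac{p}{1+\delta}=p\left(1-\delta+\delta^2-\cdots\right).
\end{align}
Taking expectations gives
\begin{align}
\mathbb{E}[\hat p_k]\approx p\left(1+\frac{1-p}{s}\right),
\end{align}
which is the relative bias $\frac{1-p_k}{S_{\rm tar}}$. To first order, $\hat p_k-p\approx -p\delta$. Therefore
\begin{align}
\mathrm{Var}[\hat p_k]\approx p^2\,\mathbb{E}[\delta^2]=\frac{p^2(1-p)}{s},
\end{align}
which yields $\mathrm{Var}'[\hat p_k]\approx\frac{1-p_k}{S_{\rm tar}}$. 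The bias contributes only at order $s^{-2}$ to the mean squared error, so it does not affect this leading variance term.

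\textbf{Step 3: sanity check.} I would check the bias claim against the classical fact that $\frac{s-1}{A_k-1}$ is the exactly unbiased estimator under negative binomial sampling. Comparing $\frac{s}{A_k}$ with $\frac{s-1}{A_k-1}$ confirms a positive bias of order $(1-p)/s$, consistent with Step 2.

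\textbf{Main obstacle.} The difficult part is justifying the idealization rather than the calculation. Attempts at level $k$ start from checkpoints resampled from $\mathcal P_k$, so they are i.i.d. only conditionally on that pool. Their success probability $p_k(\mathcal P_k)$ is random, and the checkpoints themselves were produced by a random stopping rule at level $k-1$.

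I would state the result conditionally on the pool, arguing that with $M_k$ large the empirical checkpoint law approximates the true conditional entrance law. The conditional success probability then equals $p_k$ up to $\mathcal O(M_k^{-1/2})$ fluctuations, which the theorem neglects in the same spirit as \eqref{eq:var}. To keep the expansion rigorous, I would control the remainder terms by noting that $\delta\ge -1+p/s$ and that $A_k$ has geometric tails. This makes the neglected terms $\mathcal O(s^{-2})$ uniformly for $p$ bounded away from zero.
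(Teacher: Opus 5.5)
Your proposal is correct and follows the paper's argument: treat $A_k$ as negative binomial with $\mathbb{E}[A_k]=S_{\rm tar}/p_k$ and $\mathrm{Var}[A_k]=S_{\rm tar}(1-p_k)/p_k^2$, then expand $\hat p_k=S_{\rm tar}/A_k$ about $\mathbb{E}[A_k]$, and you are actually more accurate than the paper in keeping the $\delta^2$ term, since a strictly first-order expansion, as the paper calls it, gives no bias at all. Two corrections apply only to your extra remarks: first, the remainder should be controlled with $\delta\ge p_k-1$ (from $A_k\ge S_{\rm tar}$), because your $\delta\ge -1+p_k/S_{\rm tar}$ only yields $1/(1+\delta)\le S_{\rm tar}/p_k$; second, for $k\ge 1$ the pool $\mathcal P_k$ is resampled from at most $S_{k-1}=S_{\rm tar}$ distinct checkpoints, so enlarging $M_k$ does not make it approximate the entrance law, the conditional success probability fluctuates at order $S_{\rm tar}^{-1/2}$ rather than $M_k^{-1/2}$, and this generally enters $\mathrm{Var}'[\hat p_k]$ at the same $1/S_{\rm tar}$ order as your main term, so it is removed by the i.i.d.\ Bernoulli$(p_k)$ idealization (which the paper also simply assumes), not by being negligible.
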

\begin{proof}
Each trajectory continuation at level $k$ can be viewed as a Bernoulli trial with success probability $p_k$. Let $\{Y_i\}_{i\ge 1}$ be i.i.d.\ Bernoulli$(p_k)$ variables indicating whether the $i$-th continuation reaches $\mathcal{L}_{k+1}$, hence $S_k=\sum_{i=1}^{A_k} Y_i$.

If the stopping is dominated by the success condition $S_k \ge S_{\rm tar}$, then $A_k$ is approximately the time needed to observe $S_{\rm tar}$ successes. In this case, $A_k$ follows a negative binomial distribution with mean $\mathbb{E}[A_k]\approx S_{\rm tar}/p_k$ and variance $\mathrm{Var}[A_k]\approx S_{\rm tar}(1-p_k)/p_k^2$. The estimator can then be approximated as
%\begin{align}
    $\hat p_k \approx S_{\rm tar}/A_k$.
%\end{align}
Using a first-order expansion around $\mathbb{E}[A_k]$, one obtains
\begin{align}
    \mathbb{E}[\hat p_k]
    \approx
    p_k + \frac{p_k(1-p_k)}{S_{\rm tar}},\ \ \mathrm{Var}[\hat p_k]
    \approx
    \frac{p_k^2(1-p_k)}{S_{\rm tar}},\label{eq:EaV}
\end{align}
leading straightforwardly to \eqref{eq:EV}.
\end{proof}

When both stopping conditions are active, or when the global budget truncates the procedure, the same mechanism in the above proof applies, although the expressions become more involved. Overall, increasing $S_{\rm tar}$ and $A_{\rm tar}$ reduces both (the positive) bias and variance, making the estimator asymptotically consistent.
%
%In general, the bias can be controlled easily through relatively large values of $S_{\rm tar}$ and $A_{\rm tar}$, and 
Another bias-reducing approach is to check the stopping condition only after batches of continuations, rather than after each individual attempt. 

\begin{theorem}
\label{thm:adaptive_smc_product}
Under Theorem~\ref{thm:bias_var}'s assumptions and approximate independence across levels (as in \eqref{eq:var} \cite{Bect.2017}), the relative bias and variance of the adaptive SMC estimator scale as
\begin{align}
    \mathbb{E}'[\hat p_{\rm smc}]\!&=\frac{\mathbb{E}[\hat p_{\rm smc}]-p}{p}
    =
    \prod_{k=0}^{K-1}(1+\mathbb{E}'[\hat{p}_k])-1, \label{eq:Esmc}\\
    \mathrm{Var}'[\hat p_{\rm smc}]\!&
    =\!
    \prod_{k=0}^{K-1}\!
    \left((1\!+\!\mathbb{E}'[\hat{p}_k])^2\!+\!\mathrm{Var}'[\hat{p}_k]\right)
    \!-\!
    \prod_{k=0}^{K-1}(1\!+\!\mathbb{E}'[\hat{p}_k])^2.\label{eq:Vsmc}
\end{align}
\end{theorem}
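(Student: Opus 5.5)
The plan is to treat the stage estimators $\hat p_0,\ldots,\hat p_{K-1}$ as mutually independent random variables, each with mean $\mathbb{E}[\hat p_k]=p_k(1+\mathbb{E}'[\hat p_k])$ and variance $\mathrm{Var}[\hat p_k]=p_k^2\,\mathrm{Var}'[\hat p_k]$. Under Theorem~\ref{thm:bias_var}'s assumptions these are given by \eqref{eq:EaV}. Independence is exactly the "approximate independence across levels" hypothesis, which is the same idealization behind \eqref{eq:var}: the resampled pool $\mathcal P_{k+1}$ is treated as an ideal i.i.d. draw from the entrance distribution on $\mathcal L_{k+1}$. With this, the estimator is a product of independent factors, $\hat p_{\rm smc}=\prod_k \hat p_k$, as in \eqref{eq:psmc}, and the true value is $p=\prod_k p_k$ by \eqref{pre}.

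For the bias, use the fact that the expectation of a product of independent variables factorizes:
\begin{align*}
\mathbb{E}[\hat p_{\rm smc}]=\prod_{k=0}^{K-1}\mathbb{E}[\hat p_k]=\prod_{k=0}^{K-1}p_k\big(1+\mathbb{E}'[\hat p_k]\big)=p\prod_{k=0}^{K-1}\big(1+\mathbb{E}'[\hat p_k]\big).
\end{align*}
Dividing by $p$ and subtracting one gives \eqref{eq:Esmc} directly.

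For the variance, write $\mathrm{Var}[\hat p_{\rm smc}]=\mathbb{E}[\hat p_{\rm smc}^2]-\mathbb{E}[\hat p_{\rm smc}]^2$. Since the squares $\hat p_k^2$ are also independent across levels, the second moment factorizes as well:
\begin{align*}
\mathbb{E}[\hat p_{\rm smc}^2]=\prod_{k=0}^{K-1}\mathbb{E}[\hat p_k^2]=\prod_{k=0}^{K-1}\big(\mathrm{Var}[\hat p_k]+\mathbb{E}[\hat p_k]^2\big)=p^2\prod_{k=0}^{K-1}\Big(\mathrm{Var}'[\hat p_k]+\big(1+\mathbb{E}'[\hat p_k]\big)^2\Big).
\end{align*}
Subtracting $\mathbb{E}[\hat p_{\rm smc}]^2=p^2\prod_k(1+\mathbb{E}'[\hat p_k])^2$ and dividing by $p^2$ yields \eqref{eq:Vsmc}. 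The relative variance is normalized by $p^2$, consistent with how $\mathrm{Var}'$ is used in \eqref{eq:var}. As a consistency check, I would set the biases to zero and expand to first order: this gives $\mathrm{Var}'[\hat p_{\rm smc}]\approx\sum_k \mathrm{Var}'[\hat p_k]$. Substituting \eqref{eq:EV} then yields $\sum_k(1-p_k)/S_{\rm tar}$, which matches \eqref{eq:var} with $M\approx S_{\rm tar}/p_k$.

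The algebra is routine. The main obstacle is justifying the factorization, because the levels are in fact coupled. The checkpoints at level $k+1$ are resampled from the successes at level $k$, so they are correlated through shared ancestors. The random pool size $M_{k+1}$ also depends on $\hat p_k$ through \eqref{eq:Mk}. I would handle this with a conditioning argument. Conditionally on the past $\mathcal G_k$ (all randomness up to the end of level $k$), if the entrance states are assumed to be exact draws from the conditional law given $\xi_{k+1}$, then the continuations at level $k+1$ are i.i.d. Bernoulli$(p_{k+1})$ trials. Under success-dominated stopping, the law of $\hat p_{k+1}$ then does not depend on $\mathcal G_k$: it is a function of a negative-binomial $A_{k+1}$ alone, independent of $M_{k+1}$. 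Iterating the tower property from $k=K-1$ downward gives both factorizations exactly under this idealization. The "approximately" in the statement is precisely the error from the resampling genealogy, which I would not quantify; I would instead refer to \cite{Bect.2017,Moral.2006}.
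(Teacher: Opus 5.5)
Your proposal is correct and follows the paper's proof essentially line by line. You treat the stage estimators as independent, factorize $\mathbb{E}[\hat p_{\rm smc}]$ and $\mathbb{E}[\hat p_{\rm smc}^2]$, rewrite each factor through $\mathbb{E}'[\hat p_k]$ and $\mathrm{Var}'[\hat p_k]$, subtract, and divide by $p^2$. The paper simply takes cross-level independence as an assumption, so your tower-property justification goes beyond it. Strictly speaking, with a finite pool sampled with replacement, the conditional success probability given the past is the pool average rather than $p_{k+1}$; this belongs to the idealization you already flag.
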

\begin{proof}
Using $\mathbb{E}[\hat p_k]=p_k(1+\mathbb{E}'[\hat{p}_k])$ under the approximate independence assumption gives
\begin{align}
    \mathbb{E}[\hat p_{\rm smc}]
    =
    \prod_{k=0}^{K-1}p_k(1+\mathbb{E}'[\hat{p}_k])
    =
    p\prod_{k=0}^{K-1}(1+\mathbb{E}'[\hat{p}_k]),
\end{align}
which immediately leads to \eqref{eq:Esmc}. 
Meanwhile,
\begin{align}
    \mathbb{E}[\hat p_{\rm smc}^2]&=\mathbb{E}\Big[\prod_{k=0}^{K-1}\hat{p}_k^2\Big]=\prod_{k=0}^{K-1} \mathbb{E}[\hat{p}_k^2]\nonumber\\
    & =
    p^2
    \prod_{k=0}^{K-1}
    \left(\mathrm{Var}'[\hat{p}_k]+(1+\mathbb{E}'[\hat{p}_k])^2\right).\label{eq:yyy}
\end{align}
where the last step comes from using 
\begin{align}
    \mathbb{E}[\hat p_k^2]
    &=
    \mathrm{Var}[\hat p_k]+\mathbb{E}[\hat p_k]^2
    \nonumber\\
    &=
    p_k^2 \mathrm{Var}'[\hat{p}_k]+p_k^2(1+\mathbb{E}'[\hat{p}_k])^2,\label{eq:xxx}
\end{align}

Subtracting $\mathbb{E}[\hat p_{\rm smc}]^2
=
p^2\prod_{k=0}^{K-1}(1+\mathbb{E}'[\hat{p}_k])^2$ from $\mathbb{E}[\hat p_{\rm smc}^2]$ in \eqref{eq:yyy}, 
and dividing by $p^2$ yields the relative-variance expression in \eqref{eq:Vsmc}.
\end{proof}

For small $\mathbb{E}'[\hat{p}_k]$ and $\mathrm{Var}'[\hat{p}_k]$, \eqref{eq:Esmc} and \eqref{eq:Vsmc} yield the first-order approximation
\begin{align}
    \mathbb{E}'[\hat p_{\rm smc}]
    \!\approx\!
    \sum_{k=0}^{K-1} \mathbb{E}'[\hat{p}_k],
    \ \
    \mathrm{Var}'(\hat p_{\rm smc})
    \!\approx\!
    \sum_{k=0}^{K-1} \mathrm{Var}'[\hat{p}_k].
\end{align}
This implies that the overall estimation error accumulates approximately linearly across levels, making the SMC estimator's performance highly sensitive to poorly estimated stages. 
%Consequently, an effective design should aim at balancing the stage-wise transition probabilities and allocating computational effort so that all levels contribute comparably to the total error.

%
%Finally, note that 
Overall, instead of adapting the levels as in other SMC frameworks relying on resampling and dynamic allocation of computational effort, e.g., \cite{Moral.2006,Smith.2013,Bect.2017,Cerou.2007}, our proposed formulation explicitly constrains the total computational budget and distributes it across levels through data-driven stopping criteria. This provides a simple and practical mechanism to concentrate simulation effort on the most critical transitions without requiring prior tuning of population sizes, and more importantly, it retains the interpretability of fixed-level splitting, making it particularly suitable for resilience analysis.

\subsection{Mitigation Policies}\label{sec:reconf}
Herein, we discuss the potential of the SMC machinery for fault-mitigation policies, which are naturally incorporated in resilient systems. 
We consider policies as decision rules that modify the system's stochastic evolution based on the current state and time, affecting disturbance, recovery dynamics, and/or corrective actions. Indeed, resilience-oriented policies are state/phase-dependent, activating only after certain degradation levels are reached \cite{Alves.2025}. The key idea is to reuse the near-critical system states generated during splitting as representative starting points for controlled interventions.

Let $u\in\mathcal{U}$ denote a policy that may encompass preventive, reactive, or recovery actions. Under policy $u$, the system state evolves according to a controlled stochastic process $X^u(t)$, 
inducing a policy-dependent resilience failure probability
\begin{align}
    p(u)\triangleq \Pr(\xi|u). %\label{eq:pu}
\end{align}
%Note that when the policies modify the system dynamics after certain levels are reached, the SMC estimator remains unbiased for the policy-conditional probability \eqref{eq:pu}
%
Note that policies are causal,
%assumed to be fixed over the duration of each simulated trajectory and to act 
i.e., decisions at time $t$ may depend only on information available up to that time, and may affect: i) the intensity or distribution of exogenous disturbances, ii) the system's recovery or stabilization dynamics, iii) the availability or effectiveness of corrective actions, or iv) combinations thereof. 
From a simulation standpoint, this corresponds to propagating trajectories according to policy-dependent dynamics between successive level-hitting events. 
Formally, this can be represented as
\begin{align}
X(t^\star+\Delta)
=
\mathrm{Sim}_{u(t^\star,X(t^\star))}\left(X(t^\star),W(t^\star)\right).    
\end{align}

Provided that the controlled process remains Markovian with respect to the simulator state used for checkpointing, standard SMC guarantees continue to hold for $p(u)$.
Indeed, the level sets themselves must remain policy-invariant, as they define the event decomposition. This ensures that i) the estimator remains unbiased for $p(u)$, ii) stage-wise probabilities are comparable across policies, and iii) differences in estimated resilience performance can be unambiguously attributed to policy effects. 

{Given the above, the conditional probabilities
\begin{align}
    p_k(u)\triangleq \Pr(\xi_{k+1}|\xi_k,u)
\end{align}
quantify how a given policy influences the likelihood of transitioning between successive resilience phases, hence providing fine-grained insight into where and how a policy is effective and allowing policy comparison. 
This quantity is estimated via MC by branching each stored checkpoint at level $\mathcal{L}_k$ into multiple continuations under policy $u$ to the next level $\mathcal{L}_{k+1}$. These checkpoints correspond to simulator states at the first hitting time of $\mathcal{L}_k$ and thus represent samples from the conditional (first-hitting) distribution induced by the baseline dynamics.}
This allows for state-contingent policy selection at predefined resilience levels as illustrated in Fig.~\ref{fig:lookahead} and discussed below. 

When a trajectory first reaches a level policy-hosted $\mathcal{L}_k$ at state $x_k^\star$, a temporary decision-support procedure can be invoked to select a mitigation policy based on simulated future outcomes.
Concretely, for each candidate policy $u_i\in\mathcal{U}$, an inner simulation is performed starting from the checkpoint state $x_k^\star$, using either SMC or MC over some or all remaining levels $\mathcal{L}_{k+1},\cdots,\mathcal{L}_K$ to estimate $\{p_{k}(u_i)\},\cdots,\{p_{K-1}(u_i)\}$ and other system-relevant performance parameters. Based on these estimates, a policy
\begin{align}
    u^\star(x_k^\star)=\arg\min_{u_i\in\mathcal{U}} \mathcal{J}(u_i|x_k^\star)
\end{align}
is selected according to a predefined decision criterion $\mathcal{J}(\cdot)$.
An intuitive, modeling for $\mathcal{J}$ is
\begin{align}
    \mathcal{J}(u_i|x_k^\star)=\sum_{l=k}^{K'}\ln \hat{p}_l(u_i) + c_k(u_i), \label{eq:LJ}
\end{align}
where $c_k(\cdot)$ captures the system cost (according to some metric) of implementing policy $u_i$. The impact of the conditional probabilities is linearized in \eqref{eq:LJ} to align with the behavior of common cost performance metrics. Meanwhile, $K'$ denotes a lookahead level such that $k\le K'\le K-1$. For instance, in a myopic setting, one has $K' = k$ such that \eqref{eq:LJ} leads to
%
%myopic modeling here could be $\mathcal{J}(u_i|x_k^\star)=\hat{p}_k(u_i)+c_k(u_i)$, wherein $c_k(\cdot)$ captures the system cost (according to some metric) of implementing policy $u_i$. Note that in such an example, we just consider
the policy that just reduces the chances of progressing one level in the fault chain with a reasonable cost. Obviously, approaches with greater $K'$ should lead to better mitigation, but considering always that 
%long-sighted policies increase computational cost.
evaluating deeper lookahead (i.e., simulating across more remaining levels or longer horizons) increases computational cost. 
%Obviously, approaches with greater $K'$ can be designed, but considering always that 
%long-sighted policies increase computational cost.
%evaluating deeper lookahead (i.e., simulating across more remaining levels or longer horizons) increases computational cost.

Importantly, the simulations used for policy selection serve only as decision support and are not reused as part of the realized system evolution. After selecting $u^\star(x_k^\star)$, the system is resumed from the same checkpoint state $x_k^\star$  using fresh, independent randomness under the selected policy. 
Under this separation, policy selection does not violate causality since simulated futures inform the policy choice, but the actual post-decision evolution remains an independent realization of the controlled dynamics. As a result, the procedure defines a well-posed state-contingent policy that can be evaluated by MC or SMC at the outer level. The computational overhead of the inner simulations can be controlled by limiting the number of candidate policies or by invoking the selection procedure only at a small number of critical levels, as indicated earlier.

The above procedure enables local, state-contingent policy selection at predefined resilience levels through simulation-based lookahead. Note that this does not aim to perform global policy synthesis or solve a full optimal control problem over the system state space. Instead, the framework supports bounded, simulation-driven optimization over a finite set of candidate mitigation policies, applied selectively at near-critical operating points where resilience is most at risk. This allows SMC to provide a principled mechanism for informing policy decisions, particularly in regimes dominated by rare, path-dependent events, complementing analytical optimization and learning-based approaches that are typically designed for averaged or data-rich operating conditions.

\begin{figure}[t!]
    \centering
    \includegraphics[width=1\linewidth]{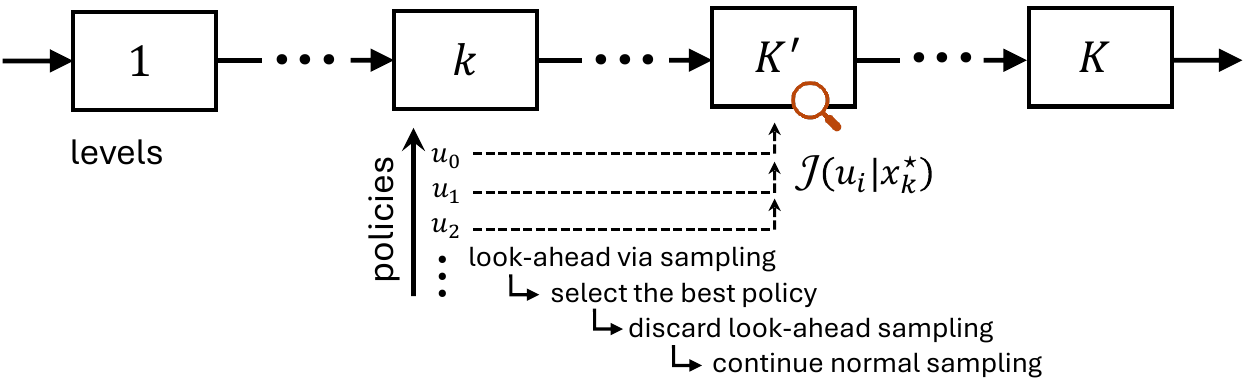}
    \caption{Illustration of the online mitigation/control procedure.}
    \label{fig:lookahead}
\end{figure}

%Although the above SMC procedure enables local, state-contingent policy selection at predefined resilience levels through simulation-based lookahead, it does not aim to perform global policy synthesis or solve a full optimal control problem over the system state space. Instead, the framework supports bounded, simulation-driven optimization over a finite set of candidate mitigation policies, applied selectively at near-critical operating points where resilience is most at risk. In this sense, 
%Overall, SMC provides a principled mechanism for informing policy decisions, particularly in regimes dominated by rare, path-dependent events, complementing analytical optimization and learning-based approaches that are typically designed for averaged or data-rich operating conditions.
%

\section{Stochastic Simulation with Learned Surrogates} \label{sec:AI}
%\textcolor{blue}{maybe title can be modified to give some hint on generative modeling}

So far, we have assumed access to an explicit and perfect stochastic simulator, which is rarely the case in real-world deployments. 
Indeed, operational systems are characterized by partial observability, high-dimensional and heterogeneous dynamics, and hidden couplings/interactions \cite{Alves.2025}. Therefore, any simulator, if available at all, can only provide an approximation of the true dynamics, no matter how accurate it can be. In practice, one typically has access to data rather than to the true physical model. The relevant task is therefore to learn a stochastic simulator from observations. A data-driven DT provides the operational form of this learned simulator, enabling trajectory sampling for resilience assessment without requiring access to the physical system or an exact model. Such a learned simulator must capture the temporal structure of the observed history and, autoregressively, generate uncertain future evolution until a critical event of interest, such as service degradation or failure, is reached.

% FOR INTRO:
% Data-driven sequence modeling provides a natural foundation for this task. Recurrent neural networks, long short-term memory (LSTM) networks \cite{hochreiter1997long}, attention-based recurrent architectures, and Transformers \cite{vaswani2017attention} have been widely used to capture temporal dependencies and inter-slot correlations in sequential data. In parallel, generative models such as variational autoencoders (VAEs) \cite{kingma2013auto} and diffusion models \cite{Ho.2020} have enabled stochastic sampling from complex, high-dimensional distributions. Combining temporal sequence models with probabilistic generative modeling enables DTs to generate diverse and physically plausible trajectory rollouts.

\subsection{Generative Sequence Modeling}

Imagine time is discretized and let $X[j]$ denote a state descriptor of the network at time index $j$. For a history length $L_h$ and a future rollout length $L_f$, define
\begin{subequations}\label{eq:futhist_concat}
\begin{align}
\mathbf{x}^{\mathrm{h}}_j
&=
\big[X[j-L_h+1],\ldots,X[j]\big]^T,
\label{eq:futhist_concat_h}\\
\mathbf{x}^{\mathrm{f}}_j
&=
\big[X[j+1],\ldots,X[j+L_f]\big]^T .
\label{eq:futhist_concat_f}
\end{align}
\end{subequations}
Here, $\mathbf{x}^{\mathrm{h}}_j$ summarizes the recent system evolution, while $\mathbf{x}^{\mathrm{f}}_j$ denotes the future block. In addition, let $\mathbf{z}_j$ denote side information relevant to how the future may evolve, such as the operating regime, control action, reaction-coordinate, persistence counter, or external context.

A generic sequence-to-sequence model first maps the history window and side information into a temporal representation given by
\begin{equation}\label{eq:encoderLR}
\mathbf{h}_j = \mathrm{Enc}_\phi(\mathbf{x}^{\mathrm{h}}_j,\mathbf{z}_j),
\end{equation}
where $\mathrm{Enc}_\phi(\cdot)$ may be implemented using recurrent units or attention-based models \cite{vaswani2017attention}. The future block is then produced from this representation by a generation module. To learn a restartable simulator, we learn the conditional continuation law
$
p_\theta\!\left(
\mathbf{x}^{\mathrm{f}}_j
\mid
\mathbf{x}^{\mathrm{h}}_j,\mathbf{z}_j
\right)$.
This law is induced by a stochastic generator of the form
\begin{equation}\label{eq:decoderLR}
\widehat{\mathbf{x}}^{\mathrm{f}}_j
=
\mathrm{Sim}_\theta(\mathbf{h}_j,\boldsymbol{\epsilon}),
\end{equation}
where $\boldsymbol{\epsilon}$ is a randomness source, ensuring different continuations from a given encoded state $\mathbf{h}_j$. Depending on the model class, $\boldsymbol{\epsilon}$ may correspond to a latent variable, injected diffusion noise, sampling noise, or other stochastic component. 

The way in which $\mathbf{z}_j$ is taken into account determines how the conditional trajectory law is parameterized. For this, the following modeling choices are relevant:

\begin{figure}
    \centering
    \includegraphics[width=\linewidth]{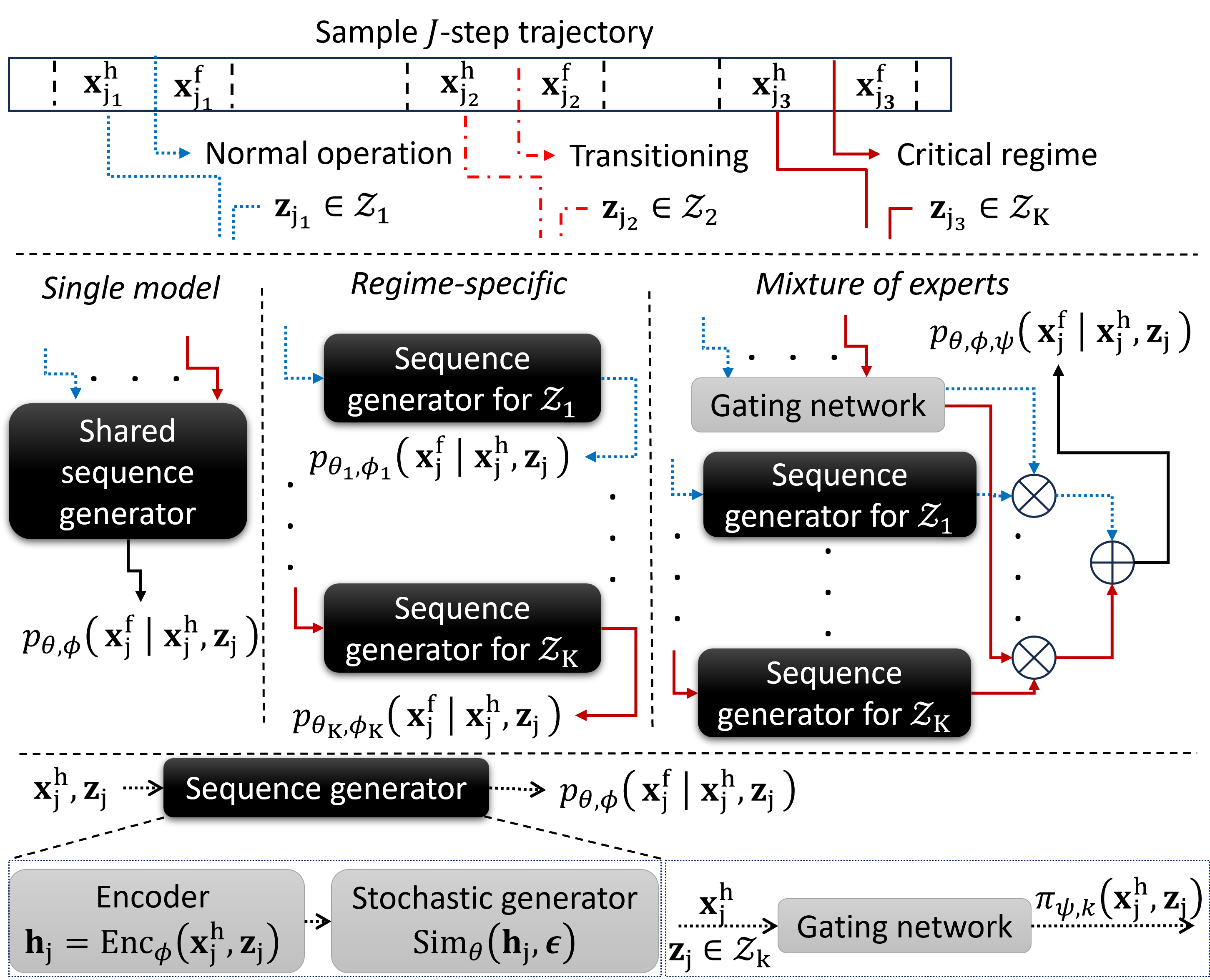}
    \caption{Sequence generation modeling choices for a restartable stochastic simulator. A trajectory is divided into history--future windows associated with different operating regimes. The conditional future distribution can be modeled using a single shared generator, regime-specific generators, or a mixture-of-experts architecture with a gating network.}
    \label{fig:learning_blocks}
\end{figure}

\begin{itemize}
    \item \textbf{Single conditional generator:}
    A single encoder--generator pair is used for all operating regimes. Here, the future block is sampled as in \eqref{eq:decoderLR}. Equivalently, this induces the conditional trajectory law
    $
    p_{\theta,\phi}
    \left(
    \mathbf{x}^{\mathrm{f}}_j
    \mid
    \mathbf{x}^{\mathrm{h}}_j,\mathbf{z}_j
    \right).
    $
    This option is suitable when different regimes share common temporal structure, e.g., when $\mathbf{z}_j$ is continuous, moderately varying, or when the corresponding regimes exhibit overlapping behavior. 

    \item \textbf{Regime-specific generators:}
    Distinct encoder--generator pairs can be trained for different values or ranges of $\mathbf{z}_j$. Let $\{\mathcal{Z}_k\}_{k=1}^{K}$ denote a partition of the context space, where each $\mathcal{Z}_k$ corresponds to a specific regime. If $\mathbf{z}_j\in\mathcal{Z}_k$, then the $k$-th model is used such that $
    \widehat{\mathbf{x}}^{\mathrm{f}}_j
    =
    \mathrm{Sim}_{\theta_k}
    (
    \mathbf{h}^{(k)}_j,\boldsymbol{\epsilon}
    ).
    $
    Hence, the corresponding conditional law is written as 
    \begin{equation}
    p_\theta
    \left(
    \mathbf{x}^{\mathrm{f}}_j
    \mid
    \mathbf{x}^{\mathrm{h}}_j,\mathbf{z}_j
    \right) = 
    p_{\theta_k,\phi_k}
    \left(
    \mathbf{x}^{\mathrm{f}}_j
    \mid
    \mathbf{x}^{\mathrm{h}}_j,\mathbf{z}_j
    \right),
    \ 
    \mathbf{z}_j\in\mathcal{Z}_k.
    \end{equation}
    This is relevant when the regimes are explicitly identifiable from $\mathbf{z}_j$, have limited overlap, and exhibit sufficiently distinct local transition laws to justify separate model parameters. In this case, a single model would average across modes. However, this approach partitions the dataset and requires sufficient data for each regime. It may also introduce discontinuities when switching between models during inference.

    \item \textbf{Mixture-of-experts generator:}
    A mixture-of-experts formulation uses several expert encoder--generator pairs together with a gating function \cite{VAE_MOE}. The $k$-th expert induces $p_{\theta_k,\phi_k}
    \left(
    \mathbf{x}^{\mathrm{f}}_j
    \mid
    \mathbf{x}^{\mathrm{h}}_j,\mathbf{z}_j
    \right)$, while the overall conditional trajectory law is then given by
    \begin{multline}
    p_{\theta,\phi,\psi}
    \left(
    \mathbf{x}^{\mathrm{f}}_j
    \mid
    \mathbf{x}^{\mathrm{h}}_j,\mathbf{z}_j
    \right)
    = \\
    \sum_{k=1}^{K}
    \pi_{\psi,k}
    \left(
    \mathbf{x}^{\mathrm{h}}_j,\mathbf{z}_j
    \right)
    p_{\theta_k,\phi_k}
    \left(
    \mathbf{x}^{\mathrm{f}}_j
    \mid
    \mathbf{x}^{\mathrm{h}}_j,\mathbf{z}_j
    \right).
    \end{multline}
    Here, $\pi_{\psi,k}(\mathbf{x}^{\mathrm{h}}_j,\mathbf{z}_j)$ is the gating weight of the $k$-th expert, satisfying
    \begin{equation}
    \pi_{\psi,k}
    \left(
    \mathbf{x}^{\mathrm{h}}_j,\mathbf{z}_j
    \right)\geq 0, \quad
    \sum_{k=1}^{K}
    \pi_{\psi,k}
    \left(
    \mathbf{x}^{\mathrm{h}}_j,\mathbf{z}_j
    \right)=1.
    \end{equation}
    Each expert can specialize in a different part of the operation phase, while the gate selects or combines experts according to the current history and context. This is useful when regime boundaries are soft or overlapping. Compared with fully separate regime-specific generators, it enables specialization while allowing smoother transitions between regimes. However, it introduces additional training complexity and requires careful regularization or balanced sampling to avoid expert collapse, especially when rare regimes are underrepresented.
\end{itemize}

The overall procedure for learning sequence generation models for stochastic simulation, together with the corresponding modeling choices, is illustrated in Fig.~\ref{fig:learning_blocks}. Once the conditional generator is learned, long-horizon trajectories can be generated autoregressively. Given the current history and context, a future block is sampled as
$
\widehat{\mathbf{x}}^{\mathrm{f}}_j
\sim
p_\theta
\left(
\mathbf{x}^{\mathrm{f}}_j
\mid
\mathbf{x}^{\mathrm{h}}_j,\mathbf{z}_j
\right).
$
The generated samples are appended to the history, the oldest samples are discarded, and the context variables are updated. This produces the rollout
\begin{equation}
\mathbf{x}^{\mathrm{h}}_j
\rightarrow
\widehat{\mathbf{x}}_{j+1:j+L_f}
\rightarrow
\widehat{\mathbf{x}}_{j+L_f+1:j+2L_f}
\rightarrow
\cdots,
\end{equation}
until the maximum horizon is reached or a stopping condition is satisfied. The stopping condition may correspond to recovery, a new critical level, or a resilience failure event.
%
%This window-based stochastic generator can therefore act as a restartable simulator. Specifically, 
Overall, this supports the two operations required by SMC: conditional continuation from an intermediate state and branching from a stored checkpoint.

\subsection{Criticality-Aware Data Representation}\label{sec:criticality_data}

% The model-side mechanisms discussed earlier in Section~\ref{sec:intro} mainly address how the generative model or sampling process can better represent rare regimes, but model expressiveness alone is not sufficient. If the training data do not expose the model to degradation, persistence, recovery, and failure-prone transitions, the learned model may reproduce average dynamics while ignoring the rare paths that dominate resilience failures. Thus, data representation must also be designed with tail behavior in mind. This becomes particularly important when rare transitions have low empirical frequency, as increasing the dataset size alone may still leave the model with too few informative near-critical samples for stable training.

The model-side mechanisms discussed in Section~\ref{sec:intro} mainly address how the generative model or sampling process can better represent rare regimes, but model expressiveness alone is not sufficient. If the training data do not expose the model to degradation, persistence, recovery, and failure-prone transitions, the learned model may reproduce average dynamics while ignoring the rare paths that dominate resilience failures. Thus, data representation must also be designed with tail behavior in mind. This improves both tail coverage and sample efficiency, such that rare transition windows become statistically visible during training, while abundant low-criticality samples do not dominate gradient updates.

Given a set of trajectories
\begin{equation}
    \mathcal{D}_{\mathrm{traj}}
=
\left\{
\mathbf{x}^{(n)}_{1:J_n}
\right\}_{n=1}^{N},
\end{equation}
a windowed training dataset can be constructed as
\begin{equation}
\begin{aligned}
\mathcal{D}
=
\Big\{
&\big(
\mathbf{x}^{(n)}_{j-L_h+1:j},
\mathbf{x}^{(n)}_{j+1:j+L_f},
\mathbf{z}^{(n)}_j
\big) \; \Big| \\
& n=1,\ldots,N,\quad
j=L_h,\ldots,J_n-L_f
\Big\}.
\end{aligned}
\end{equation}
Here, each sample contains a history window, a future window, and a conditioning vector. As discussed previously, $\mathbf{z}^{(n)}_j$ may include information relevant to resilience assessment, but including it in the conditioning signal is useful only if the corresponding regimes are sufficiently represented in the dataset. For this, uniform selection is usually insufficient because most windows are collected from low-criticality parts of the trajectories, leading to poor tail representation. To address this, windows can be grouped according to their criticality using the reaction coordinate $g(\cdot)$ and the levels $\{\ell_k\}$, such that
\begin{equation}
\mathcal{D}_k
=
\left\{
(\mathbf{x}^{\mathrm{h}}_j,\mathbf{x}^{\mathrm{f}}_j,\mathbf{z}_j):
\ell_k \leq g(x_j)<\ell_{k+1}
\right\}.
\end{equation}
The subsets $\{\mathcal{D}_k\}$ separate different regimes depending on the chosen level definitions in the system of interest.

Training batches can then be sampled from a level-aware empirical mixture given by
\begin{equation}
q(\mathcal{D})
=
\sum_{k=0}^{K}
\alpha_k q_k(\mathcal{D}_k),
\end{equation}
where $q_k$ is the empirical distribution over windows in $\mathcal{D}_k$, and $\alpha_k$ controls how often windows from level $k$ are selected during training. Choosing larger $\alpha_k$ for higher-criticality levels increases the exposure to rare but important transitions underrepresented by uniform sampling.

%Note that here, oversampling only rare failures may distort the learned dynamics and produce a poor surrogate for sequential rollout. 
The dataset should
%instead
preserve the conditional transition structure across levels, i.e., how trajectories enter degraded regimes, how long they persist near critical thresholds, and how they either recover or progress toward failure. This leads to tail-aware data representation, complementing tail-aware generative modeling. 
%We will provide an example of criticality-aware dataset construction later in the following Section.

\section{A Use Case Example}\label{sec:system}

We illustrate the proposed resilience framework from Sections~\ref{sec:Res} and \ref{sec:AI} through a delay-critical wireless network system, e.g., a user(s)-RAN-transport-core-application segment, subject to stochastic stress and recovery dynamics. 
The idea is not to model individual system or protocol layers but to capture service-level degradation and recovery behavior, which is paramount for resilience assessment.
\subsection{System Model}\label{sec:model}
The system is observed over a finite horizon $T$, discretized into time steps of duration $\Delta=T/J$ and indexed by $j=0,\cdots,J$. We consider a service queue fed by a constant normalized arrival workload $\Lambda$,
%\in(0,1)
expressed relative to nominal capacity, and assumed constant to isolate resilience effects induced by network stress and recovery rather than by traffic fluctuations. 
This corresponds with a steady service demand and simplifies the interpretation of rare-event probabilities. Moreover, the normalized instantaneous service capacity at time step $j$ is denoted by $C[j]\in(0,1)$,
%and evolves with the system's degradation state, 
while $B[j]$ is the corresponding backlog, representing the amount of unfinished work measured in equivalent service time and evolving as
%The system is characterized by a constant offered traffic rate $\Lambda$, a time-varying effective service rate $C[k]$, and an aggregate backlog $B[k] \ge 0$, where $k$ denotes a discrete-time index. 
%The backlog evolves as
%
\begin{align}
    B[j+1]=\max\{0, B[j]+(\Lambda-C[j])\Delta\}.\label{eq:B}
\end{align}
%where $\Delta$ is the time block duration.
%with initial condition $B[0]=0$.
This single backlog state aggregates congestion effects across the end-to-end path.

%The effective service rate is modeled as $C[k]=C_0\eta[k]$, where $C_0$ is the nominal service capacity and $\eta[k]\in[\eta_0,1]$ is a capacity degradation factor, and $\eta_0>0$ ensures residual service availability.

The service capacity dynamics are driven by environmental stress and accumulated degradation, with a logistic mapping capturing saturation effects that are common in nonlinear systems, e.g., \cite{Akhter.2023}. Specifically, we adopt
\begin{align}
    C[j]&\triangleq \frac{1}{1+e^{-\eta[j]}}, \label{eq:C} 
\end{align}
where $\eta[j]\in\mathbb{R}$ is a latent health state given by
\begin{align}
    \eta[j+1]&=\eta[j]+\nu(u)(1-C[j])^{\varphi(u)} - F'[j].\label{eq:eta}
\end{align}
Here, $\nu(u)>0$ is the recovery rate and $\varphi(u)>1$ controls the nonlinearity of the recovery dynamics, both controlled by a recovery acceleration action $u$.
The recovery rate models adaptive system response that increases when the system operates below capacity, while the exponential term captures nonlinear amplification of stress effects, consistent with models of cascading failures and overload phenomena in complex networks \cite{Rak.2020}. Meanwhile, $F'[j]\triangleq e^{F[j]}$ comprises log-normal fatigue effects using a standard autoregressive process to capture temporal correlation in environmental disturbances as
\begin{align}
    F[j+1] = \rho F[j] + (1-\rho)\mu_F + \gamma\sigma_F, \label{eq:relax}
\end{align}
where $\rho\in[0,1)$ controls temporal correlation, $\mu_F\in\mathbb{R}$ is the long-term mean of the latent stress level with $\sigma_F> 0$ determining its variability, and $\gamma\sim\mathcal{N}(0,1)$. 
%This produces a lognormal stress distribution. 

The end-to-end delay experienced by traffic is approximated using Little's law as 
\begin{align}
    %D[k]=D_0+\frac{B[k]}{C[k]+\epsilon},
    D[j]=B[j]/C[j]. \label{eq:delay}
\end{align} 
%which captures the sensitivity of delay to both backlog accumulation and service degradation. 
 %
This captures the sensitivity of delay to both backlog accumulation and service degradation. 
Refer to Fig.~\ref{fig:bufferS} for a high-level summary of the system dynamics. 
\begin{figure}[t!]
    \centering    \includegraphics[width=1\linewidth]{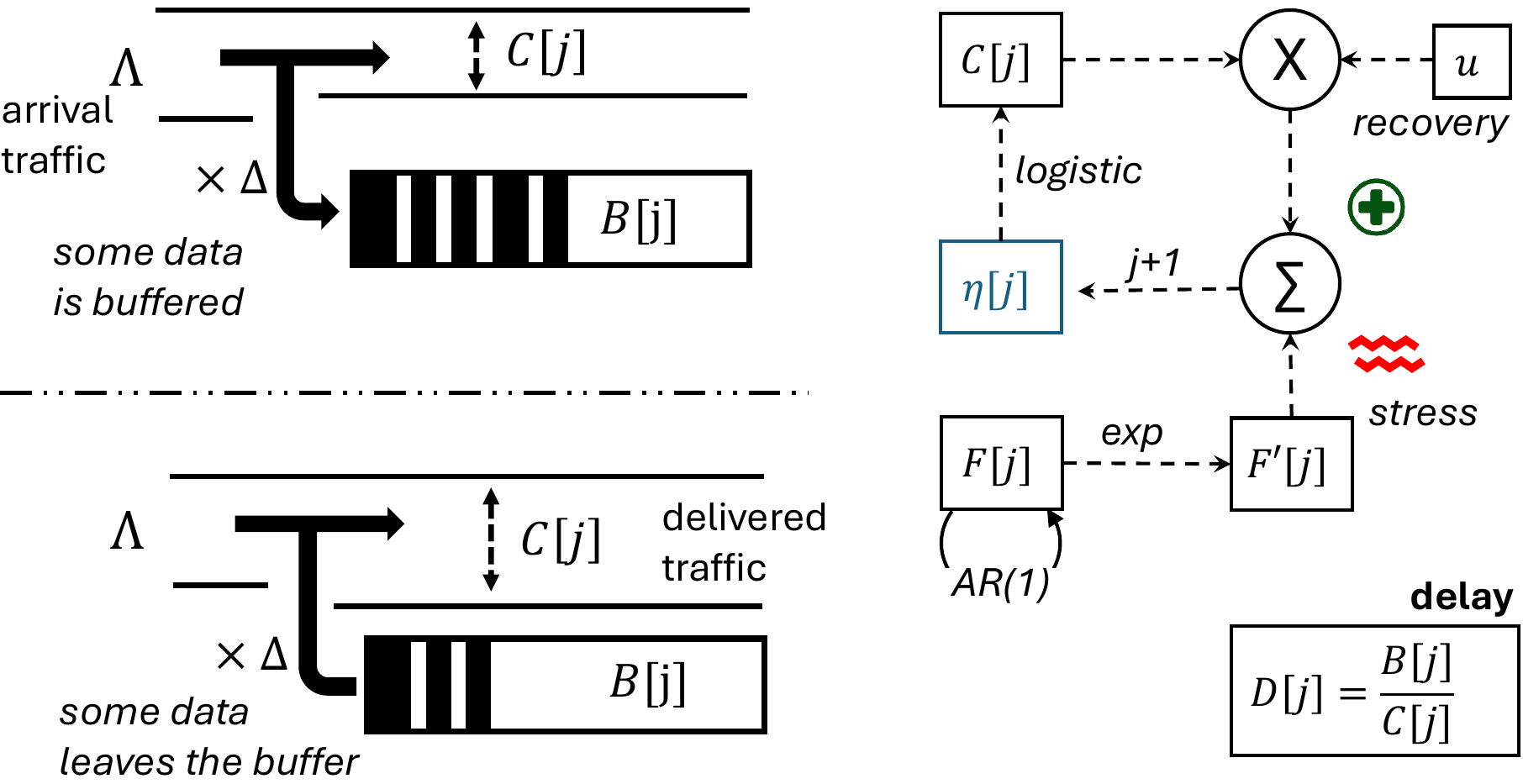}
    \caption{System dynamics of the use case.}
    \label{fig:bufferS}
    \vspace{-4mm}
\end{figure}
 
We assess resilience through the system's ability to recover from a (rare) significant delay increase within a prescribed time window. For this, let
\begin{align}
    j_c\triangleq \inf\{j: D[j]\ge \delta\}
\end{align}
denote the first time index at which the critical service delay threshold $\delta$ is exceeded. Once this event occurs, a recovery deadline of duration $t_{\rm tar}$ is imposed. 
Specifically, a resilience break occurs if, after the first critical affectation, the delay remains above the critical threshold continuously throughout the grace period. This is mathematically represented by
\begin{align}
    \xi \triangleq \{\exists j_c\le J-H: D[j]\ge \delta, \forall j\in[j_c,j_c+H]\}, \label{eq:xi2}
\end{align}
where $H\triangleq \lceil t_{\rm tar}/\Delta\rceil$ is the grace period expressed in discrete-time steps. 
Note that the event in \eqref{eq:xi2} captures a path-dependent notion of failure, as an initial disruption must occur, followed by insufficient recovery within the allowed time window.

%Together with the backlog dynamics in (22)–(23), the variables $B[j], \eta[j],$ and $Z[j]$ define a discrete-time stochastic system driven by disturbance arrivals, stress magnitudes, and recovery dynamics. In the following, we will be particularly interested in the one-step degradation increment $\Delta\eta[j]=\eta[j+1]-\eta[j]$, whose conditional distribution depends on the current health state and captures the stochastic impact of stress and recovery on system resilience.

%}

\subsection{Resilience Failure Probability Estimation}\label{sec:resR}
Let the system state at time step $j$ be defined as 
\begin{align}
    X[j]\triangleq (B[j],C[j], F[j], \varrho[j]),\label{eq:X}
\end{align}
where $B[j]$, $C[j]$, and $F[j]$ evolve according to \eqref{eq:B}$-$\eqref{eq:relax}, while 
$\varrho[j]\in\{0,1,\cdots,H\}$ is a persistence variable counting the number of consecutive time steps for which the delay has remained above the critical threshold $\delta$. The latter evolves according to the recursion
%an auxiliary state variable representing the remaining recovery slack. The latter is introduced to preserve the Markov property of the augmented process and is defined as 
%
\begin{align}
   % \varrho[j]&=\left\{\begin{array}{ll}
   %      0, & j\le j_c \\
   %      j-j_c,& j_c<j< j_c+H\\
   %      H, &  j_c+H\le j \le J
   % \end{array}\right.\nonumber\\
   % &=\min(\max(0,j-j_c),H),\ \forall j\le J.
   \varrho[j+1]\!=\!\left\{\begin{array}{ll}
        \min(\varrho[j]+1,H), & D[j]\ge \delta  \\
        0, & D[j]<\delta
   \end{array}\right.\!\!,\ \forall j<J,\label{eq:z}
\end{align}
with the initial condition $\varrho[0]=0$, and where $D[j]$ is given by \eqref{eq:delay}.
%Hence, it is zero while the system operates normally or returns to a normal state, and starts increasing upon service degradation, reaching the maximum value $H$ when the recovery deadline expires. 
This construction ensures that i) $\varrho[j]=d$ iff the delay has remained continuously above $\delta$ for the last $d$ time steps, and ii) $\{X[j]\}$ is a time-homogeneous Markov process.
With this in place, the resilience failure event $\xi$ in \eqref{eq:xi2} is equivalently the hitting of a failure set $\mathcal{F}=\{(B,C,\varrho): \varrho=H\}$ as
\begin{align}
    \xi\triangleq \{\exists j: \varrho[j]=H\}, 
\end{align}
%with first-hitting time index given by
%
%\begin{align}
%    j_\mathcal{F}\triangleq \inf\{j: \varrho[j]=H\}.
%\end{align}
and the goal is to estimate $\Pr(\xi)$.

A meaningful choice for the reaction coordinate is
\begin{align}\label{eq:react_cord}
    g(X[j])=\min\Big(\frac{D[j]}{\delta},1\Big) + \frac{\varrho[j]}{H}\ \in[0,2],
\end{align}
as it captures the end-to-end delay up to the critical deadline $\delta$, shifting then the focus to the recovery delay. This mapping captures the resilience fault progression in an ordered and delimited manner, and allows introducing a sequence of increasing levels for $g(\cdot)$ as
%
%\begin{align}
    $0=\ell_0<\cdots<\ell_K=2$.
%\end{align}
%
A simple choice for the intermediate levels is $\ell_1=1/10$, $\ell_2=1$, and $\ell_3=3/2$, which respectively capture initial progression towards critical service affectation, the start of a critical service affectation, and mid-progression towards non-recovery. This formulation with $K=4$ allows tracking the path toward failure in a structured manner.

\begin{table}[t]
\centering
\caption{Baseline parameters for simulation.}
\label{tab:baseline_params}
\begin{tabular}{lll}
\toprule
\textbf{parameter} & \textbf{description}& \textbf{value / range} \\
\midrule
$\Delta$ & discrete-time step & $50$ ms \\
$T$ & simulation horizon & $60$ s \\
%$C_0$ & initial/nominal service rate & 1 (normalized)\\
$B[0]$ & initial backlog condition & $0$\\
$\Lambda$ & offered traffic rate & 0.7 \\
%$D_0$ & baseline delay & $20$ ms \\
%$\lambda(u_p)$ & stress event rate & 0.1 $\text{s}^{-1}$\\
%$S_\text{max}$ & maximum capacity drop & 0.1 \\
$\eta[0]$ & initial/nominal recovery rate & $0.95$\\
%$\eta_0$ & capacity floor & 0.1\\
%$p_{01}$ & calm-to-noisy transition probability & 0.1 \\
%$p_{10}$ & noisy-to-calm transition probability & 0.1 \\
%$\lambda_0$ & low stress-event intensity & $1$ s$^{-1}$ \\
%$\lambda_1$ & high stress-event intensity & $5$ s$^{-1}$ \\
%$S_1$ & capacity drop per event & $\mathcal{U}(0,s_\text{max})$ \\
%$\psi,\vartheta$ & background disturbance parameters & $1, 2$\\
%$\underline{\lambda}$ & residual stress lower bound & $>0$ \\
%$u_p$ & prevention/containment control & ? \\
%$u$ & recovery-acceleration control & ? \\
$\nu(u)$ & recovery (relaxation) rate & $0.2$ \\
$\varphi(u)$ & non-linearity recovery parameter & 2\\
$\rho$ & temporal correlation of the latent stress & 0.75\\
$\mu_F$ & long-term mean of the latent stress  & -5 \\
$\sigma_F$ & standard deviation of the latent stress  & $0.55$ \\
%$\rho'$ & recovery rate policy multiplier & 0.5 \\
$\delta$ & service degradation threshold & 100 ms \\
%$D_M$ & massive impact threshold & 200 ms \\
$t_{\rm tar}$ & recovery target time & 5 s\\
$K$ & number of SMC levels & 4\\
$\ell_0,\cdots,\ell_4$ & SMC levels' thresholds & $0,0.1,1,1.5,2$\\
%$M$ &  SMC particles per level & 2000 \\
%$\varepsilon$ & numerical stability constant & $10^{-6}$ \\
$S_{\rm tar}$ & target number of successful crossings & 20 \\
$A_{\rm tar}$ & min. attempted continuations per level & 100 \\
$M_m,M_M$ & min./max particle pool size per level & 20, 200 \\
$p_{\rm min}$ & min. probability floor for stability & 0.05 \\
$\varsigma$ & safety factor scaling population sizes & 1.5 \\
$C_T$ & total simulation cost & $5\times 10^6$\\
\bottomrule
\end{tabular}
\end{table}

We illustrate some results in Fig.~\ref{fig:assessment} using the parameter values listed in Table~\ref{tab:baseline_params}, which are the defaults hereafter unless otherwise specified. We show the estimated probability of the resilience failure event as a function of the service degradation threshold for both MC and SMC simulation frameworks under the same computational cost, the latter using the procedure in Alg.~\ref{alg:budget_smc}. For this, we assume no active or adaptive mitigation so that the control parameter $u_p$ 
%and $u$ are 
is held constant. 
The results confirm the superiority of SMC over MC in the region of extreme/rare failure events, i.e., $\Pr(\xi)\le 10^{-3}$. Indeed, MC can only simulate $C_{\rm mc}/J=5\times 10^{6}/(60\ \text{s}/50\ \text{ms})\approx 4167$ trajectories, limiting its fault estimate to values greater than $1/4167=2.4\times 10^{-4}$. For SMC, a conservative resolution scale is $(1/A_{\rm tar})^K$, assuming each stage uses at least $A_{\rm tar}$ attempts, hence making it feasible to estimate fault probabilities at least greater than $10^{-8}$. 
%This explains why SMC can resolve probabilities below the MC resolution under the same step budget.
%an exact analysis is cumbersome, but it is clear from Fig.~\ref{fig:assessment} how it can produce estimates for several orders of lower fault probabilities.

\begin{figure}
    \centering
    \includegraphics[width=0.95\linewidth]{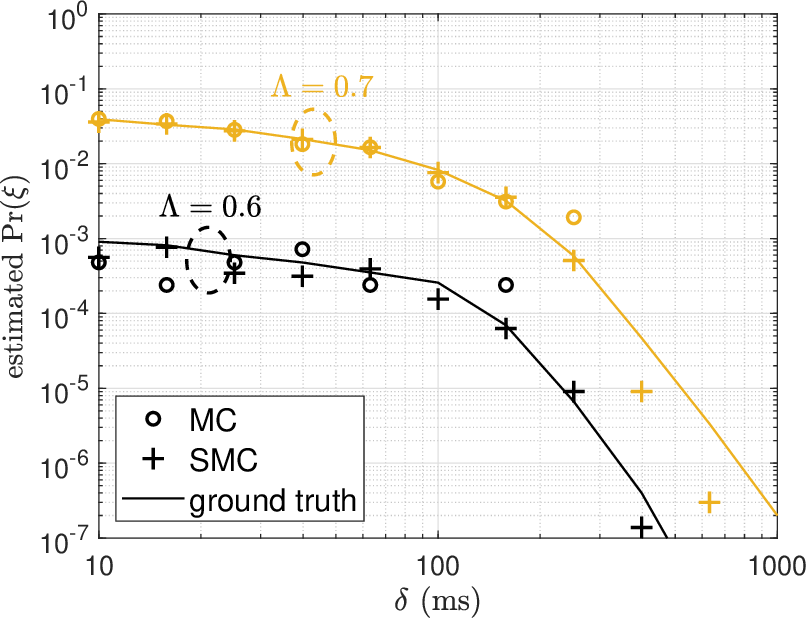}
    \caption{Estimated non-recovery probability $\Pr(\xi)$ versus the service degradation threshold $\delta$ for $\Lambda\in\{0.6,0.7\}$ ms and using MC and SMC simulation approaches. 
    %The ground truth is estimated using SMC with a greater $M$ and $C$
    %We set $S_\text{max}=\Lambda/5$ so that stress events emerging from higher traffic and capacity drops per event are positively correlated and drive resilience failure.
    }
    \label{fig:assessment}
\end{figure}

\subsection{Service Reconfiguration}
Now, we showcase SMC for fault management as discussed in Section~\ref{sec:reconf}. For simplicity, we consider service reconfiguration actions $u$ that are activated only after the system enters a critically degraded operating regime, corresponding to the first exceedance of the service delay threshold $\delta$. This corresponds to the system reaching level $\ell_2$.

We consider a finite set of candidate recovery policies $\mathcal{U}=\{u_0,\cdots,u_{|\mathcal{U}|-1}\}$, where each policy $u_i$ leads to a distinct recovery rate 
\begin{align}
    \nu(u_i)=\nu_0(1+i\rho'),\ \rho'\in(0,1],
\end{align}
with $\nu_0$ denoting the baseline recovery rate used prior to any reconfiguration and corresponding to $u_0$, and such that $\nu(u_{|\mathcal{U}|-1})\Delta\le 1$. 
%Then, to select among candidate recovery policies upon entering the critical regime, we
We adopt the myopic policy selection rule introduced in Section~\ref{sec:reconf}, such that \eqref{eq:LJ} becomes
\begin{align}
    \mathcal{J}(u_i|x_2^\star)=\ln \hat{p}_2(u_i) + c_2(u_i). \label{eq:LJ2}
\end{align}
Also, we model the policy cost proportional to the relative acceleration of recovery, as
\begin{align}
    c_2(u_i)=\kappa (\nu(u_i)-\nu_0)/\nu_0,\label{eq:cost}
\end{align}
where $\kappa\ge 0$ is a scaling parameter regulating the trade-offs between both terms in \eqref{eq:LJ2}.

%\subsubsection{SMC-assisted reconfiguration}
The conditional probability $\hat{p}_2(u_i)$ is estimated via simulation by exploiting the SMC checkpoints at level $\ell_2$. Specifically, upon first reaching $\ell_2$, the corresponding checkpoint state $x_2^\star$ is used as a common starting point, from which $N'$ independent trajectory continuations are generated under policy $u_i$, each with fresh realizations of the stochastic stress and recovery dynamics. Then, the policy that provides the smallest $\mathcal{J}(\cdot)$ in \eqref{eq:LJ2} is selected.
 %We assume that the selected policy 
Simulations for the estimation of the resilience fault probability are continued then from $\ell_2$ onward with new random processes and the selected policy in place until the end of the horizon.

\begin{figure}
    \centering
    \includegraphics[width=0.95\linewidth]{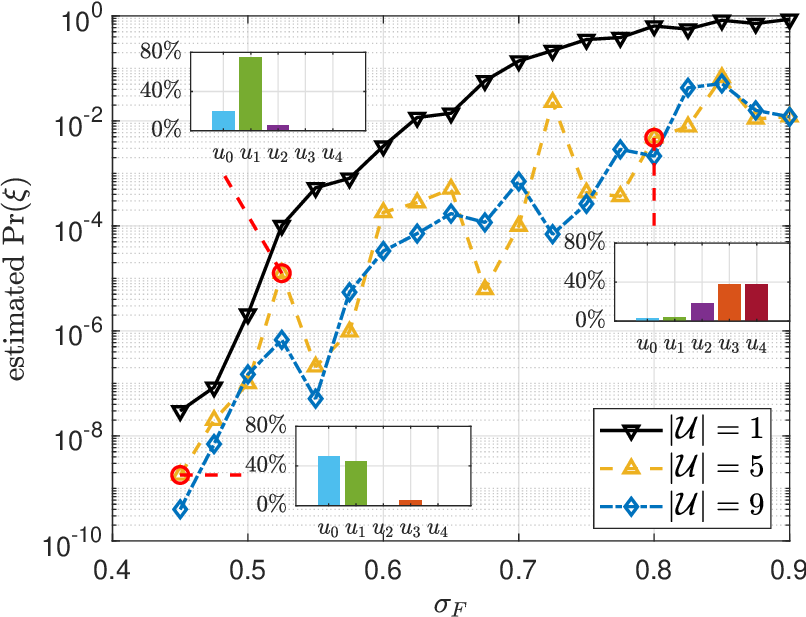}%\\
    \caption{Estimated non-recovery probability $\Pr(\xi)$ versus the standard deviation of the latent stress $\sigma_F$ with the proposed SMC-assisted reconfiguration framework and policy sets $\mathcal{U}$ of different dimensions. 
    %b) Relative frequency of utilization for the policies for the case of $|\mathcal{U}|=5$ (bottom). 
    We set $\rho'=1/2$, $\kappa=1/2$, and $N'=25$. For the case of $|\mathcal{U}|=5$, we also plot the relative selection frequency of each policy for $\sigma_F\in\{0.45, 0.575, 0.8\}$.}
    \label{fig:reconf}
\end{figure}

%\subsubsection{Numerical results}
Fig.~\ref{fig:reconf} illustrates the behavior of the estimated resilience fault probability and policy selection as a function of the latent stress variability across different dimensions of the policy sets under the proposed SMC-assisted reconfiguration framework. For this, we consider no policy reconfiguration, i.e., $|\mathcal{U}|=1$, as in the case of Section~\ref{sec:resR}, and also some reconfigurability using $|\mathcal{U}|=5$ and $|\mathcal{U}|=9$. 
We can corroborate that more comprehensive policy sets, i.e., larger $|\mathcal{U}|$, allow for more flexible recovery-cost trade-offs, often leading to improved recovery while still accounting for the policy costs \eqref{eq:cost}. Notably, under low stress, i.e., small $\sigma_F$, lower-rank policies are preferred due to their lower cost, while higher-rank, hence more disruptive, policies gain more traction as stress increases. 
All this reveals the need for carefully assessing policy selection in resilience frameworks, as consistently aggressive handling of faults can incur costs that may affect future system states.
%In highly stressed regimes, e.g., $\sigma_F\ge 0.8$, the most radical policy is adopted more than $38\%$ of the time, but surprisingly, the second most adopted is the least radical policy, with a frequency above $25\%$. All this reveals the need for carefully assessing policy selection in resilience frameworks, as consistently aggressive handling of faults can incur costs that may affect future system states.
\subsection{SMC meets genAI}

We replace the system simulator with a learned stochastic surrogate. Since different criticality levels correspond to different stages of the same delay-evolution process, 
we adopt a single context-conditioned modeling as discussed in Section~\ref{sec:AI}. 
%allowing one shared generator to adapt its future continuations to the degradation level without introducing model switches across regimes. 
The conditional trajectory law is learned with a denoising diffusion probabilistic model (DDPM).
%that generates an entire future delay block conditioned on the recent delay history and a criticality context.

\subsubsection{DDPM surrogate}

The surrogate learns a stochastic transition model for delay windows. 
%Since delay values can vary over several scales, and 
To reduce the dynamic range and improve the numerical stability, we first apply a log-delay transformation  given by
\begin{equation}
y[j]
=
\log\left(1+{D[j]}/{s_D}\right),
\end{equation}
where $s_D>0$ is a fixed scaling constant. The log-delay samples are then normalized using the mean and standard deviation obtained from the training data. Each training example is constructed as $\mathbf{y}^{\mathrm{h}}_j, \mathbf{y}^{\mathrm{f}}_j $ by using \eqref{eq:futhist_concat} and
we define ${z}_j = {g(X[j])}/{2}$ as the criticality context.
%, where $g(X[j])$ is computed by \eqref{eq:react_cord}.

As per Section~\ref{sec:AI}, the learned stochastic trajectory law is specified by
$
p_\theta
\left(
\mathbf{y}^{\mathrm{f}}_j
\mid
\mathbf{y}^{\mathrm{h}}_j,{z}_j
\right)$, which we construct using: i) a gated recurrent unit (GRU) that encodes the delay-history window $\mathbf{y}^{\mathrm{h}}_j$ into a temporal feature vector, ii) a multi-layer perceptron (MLP) that embeds the criticality context ${z}_j$, and iii) another MLP that embeds the last observed delay value $y[j]$, providing an anchor between the observed history and the generated future block. These three representations are concatenated and passed through a fusion MLP to obtain the final conditioning vector $\mathbf{h}_j$.

% For this, the conditioning representation is obtained as
% \begin{align}
% \mathbf{h}^{y}_t
% &=
% \mathrm{GRU}_{\phi_y}
% \left(
% \mathbf{y}^{\mathrm{h}}_t
% \right), \mathbf{h}^{c}_t
% =
% \mathrm{MLP}_{\phi_c}
% \left(
% {\mathbf{c}}_t
% \right),\\
% \mathbf{h}^{a}_t
% &=
% \mathrm{MLP}_{\phi_a}
% \left(
% y[t]
% \right),
% \mathbf{h}_j
% =
% \mathrm{MLP}_{\phi_f}
% \left(
% [\mathbf{h}^{y}_t,\mathbf{h}^{c}_t,\mathbf{h}^{a}_t]
% \right).
% \end{align}
% Here, $\mathbf{h}^{y}_t$ encodes the recent delay history, $\mathbf{h}^{c}_t$ embeds the criticality context, and $\mathbf{h}^{a}_t$ embeds the last observed delay value. The last component acts as an anchor that helps align the generated future block with the observed history.

The generator is a conditional DDPM over the whole future block $\mathbf{y}^{\mathrm{f}}_j$ \cite{ho2020denoising}. Here, the forward diffusion process is given by
\begin{equation}
\mathbf{y}_{\tau}
=
\sqrt{\bar{\alpha}_{\tau}}\,
\mathbf{y}^{\mathrm{f}}_j
+
\sqrt{1-\bar{\alpha}_{\tau}}\,
\boldsymbol{\epsilon},
\qquad
\boldsymbol{\epsilon}\sim\mathcal{N}(\mathbf{0},\mathbf{I}),
\end{equation}
where $\tau$ is sampled uniformly from the diffusion time index set, and $\bar{\alpha}_{\tau}$ follows a cosine noise schedule. The denoising network is a one-dimensional U-Net operating along the future-time axis and predicts the injected noise as
\begin{equation}\label{eq:noisepred}
\widehat{\boldsymbol{\epsilon}}
=
\epsilon_\theta
\left(
\mathbf{y}_{\tau},\tau,\mathbf{h}_j
\right).
\end{equation}
The diffusion timestep is represented through a sinusoidal embedding followed by an MLP. This timestep embedding and the conditioning vector $\mathbf{h}_j$ are injected into the residual convolutional blocks by feature-wise linear modulation.

The standard DDPM noise-prediction loss is given by
\begin{equation}
\mathcal{L}_{\mathrm{DDPM}}
=
\mathbb{E}_{j,\tau,\boldsymbol{\epsilon}}
\left[
\frac{1}{L_f}
\sum_{r=1}^{L_f}
\big(
\epsilon_{r}
-
\epsilon_\theta
\left(
\mathbf{y}_{\tau},
\tau,
\mathbf{h}_j
\right)_{r}
\big)^2
\right].
\end{equation}
where $\epsilon_r$ denotes the $r$-th component of the injected noise vector, and the denoising network is trained to recover this noise from the corrupted future block at diffusion timestep $\tau$. In the implementation, the loss term is reweighted to improve sensitivity to resilience-relevant samples. Specifically, windows associated with higher criticality levels receive larger sample weights, and future samples whose clean delay exceeds the critical threshold receive larger per-time-step weights. This does not change the DDPM training principle and only biases the finite-sample objective toward critical parts of the trajectory distribution, which are most relevant for estimating resilience failure probabilities.

Although the trained diffusion model defines a full DDPM reverse chain, using all diffusion steps for every generated block would be computationally expensive during MC/SMC sampling. Therefore, we use a denoising diffusion implicit model (DDIM)-style sampler \cite{song2020denoising} to accelerate inference while reusing the same trained denoising network. Specifically, we use a subsampled DDIM reverse schedule with $T'_d$ steps, where $\tau_S$ is the largest selected timestep from the full DDPM schedule of length $T_d$. The reverse chain is initialized around the last normalized history value as
\begin{equation}
\mathbf{y}_{\tau_S}
=
y[j]\mathbf{1}
+
\boldsymbol{\upsilon},
\qquad
\boldsymbol{\upsilon}\sim\mathcal{N}(\mathbf{0},\mathbf{I}).
\end{equation}
For two consecutive timesteps $\tau$ and $\tau'$ in the subsampled reverse schedule, with $\tau'<\tau$, the denoiser first predicts the injected noise $\widehat{\boldsymbol{\epsilon}}_\tau$ by \eqref{eq:noisepred}, and the next reverse state is obtained using the standard DDIM update \cite{song2020denoising}. We use the fully stochastic variant of DDIM such that repeated sampling from the same history and criticality context can produce diverse continuations. Finally, the sampled normalized log-delay block is mapped back to raw delay.

% For two consecutive timesteps $\tau$ and $\tau'$ in the subsampled reverse schedule, with $\tau'<\tau$, the denoiser first predicts the noise $\widehat{\boldsymbol{\epsilon}}_\tau$ by \eqref{eq:noisepred}. The DDIM update is then given by
% \begin{multline}
% \mathbf{y}_{\tau'}
% =
% \sqrt{\bar{\alpha}_{\tau'}}
% \left(
% \frac{
% \mathbf{y}_{\tau}
% -
% \sqrt{1-\bar{\alpha}_{\tau}}\,
% \widehat{\boldsymbol{\epsilon}}_\tau
% }{
% \sqrt{\bar{\alpha}_{\tau}}
% }
% \right) \\
% +
% \sqrt{1-\bar{\alpha}_{\tau'}-\sigma_{\tau,\tau'}^{2}}\,
% \widehat{\boldsymbol{\epsilon}}_\tau
% +
% \sigma_{\tau,\tau'}\mathbf{n},
% \end{multline}
% where $\mathbf{n}\sim\mathcal{N}(\mathbf{0},\mathbf{I})$ and
% \begin{equation}
% \sigma_{\tau,\tau'}
% =
% \eta
% \sqrt{
% \frac{1-\bar{\alpha}_{\tau'}}
%      {1-\bar{\alpha}_{\tau}}
% \left(
% 1-
% \frac{\bar{\alpha}_{\tau}}
%      {\bar{\alpha}_{\tau'}}
% \right)
% }.
% \end{equation}
% Thus, $\eta$ directly scales the noise injected at each reverse step. Setting $\eta=0$ removes the stochastic term and yields deterministic DDIM sampling, while $\eta>0$ produces stochastic continuations. In this work, we set $\eta=1$ to generate diverse stochastic continuations. Finally, the sampled normalized log-delay block is mapped back to raw delay.

% We consider two dataset construction modes. The first, \textit{uniform}, samples valid post-transient windows uniformly using the predefined stride. The second mode, \textit{balanced},

\subsubsection{Dataset construction}

We follow Section~\ref{sec:criticality_data} and design the dataset to improve the representation of critical and transitional regimes by combining two sets of windows. The first set is obtained from a balanced criticality-aware selection. Since most valid checkpoints belong to the lowest criticality level, $k=0$, these windows are sampled using a stride $N_s$. In contrast, all valid checkpoints with $k>0$ are retained as candidates to improve the representation of degraded and near-critical behavior. The number of retained windows per level is capped using a frequency-aware rule. Level $0$ receives cap $\hat{N}$, while level $k>0$ receives
\begin{equation}
\hat{N}_{k}
=
\max
\left\{
N_{\min},
\left\lfloor
\hat{N}
\frac{f_k}{f_0}
\right\rfloor
\right\},
\end{equation}
where $f_k$ is the empirical frequency of level $k$ over checkpoints, $f_0$ is the level-$0$ frequency, and $N_{\min}$ is a minimum cap. The second set consists of transition windows, defined as valid checkpoints where the discrete level index changes between two consecutive time instants. These transition windows include both upward and downward level changes and are capped at $N_t$ independently per level. The final dataset is the concatenation of the balanced and the transition windows.

% To isolate the effect of window selection from the effect of dataset size, \textit{uniform} is size-matched to \textit{balanced}; if more uniform candidates are available, a random subset of the required size is retained. 

\subsubsection{Numerical results}

% \begin{table}[t]
% \centering
% \caption{Parameters for the diffusion-DT.}
% \label{tab:diffusion_dt_params}
% \begin{tabular}{llll}
% \hline
% \textbf{Parameter} & \textbf{Value} & \textbf{Parameter} & \textbf{Value} \\
% \hline
% Source trajectories & $20000$ & $L_h$ & $64$ \\
% $L_f$ & $16$ & $s_D$ & $0.05$ \\
% Low-criticality stride & $10$ & $\hat{N}$ & $15000$ \\
% $N_{\min}$ & $1000$ & Transition windows & $3000$ \\
% Hidden dimension & $128$ & GRU layers & $1$ \\
% $T_d$ & $1000$ & DDIM steps & $50$ \\
% U-Net base channels & $64$ & Epochs & $100$ \\
% Batch size & $128$ & Learning rate & $3\times 10^{-4}$ \\
% \hline
% \end{tabular}
% \end{table}

\begin{table}[t]
\centering
\caption{Parameters for the diffusion-DT.}
\label{tab:diffusion_dt_params}
\begin{tabular}{lll}
\toprule
\textbf{parameter} & \textbf{description} & \textbf{value} \\
\midrule
$N_{\rm src}$ & source trajectories for training data & $20000$ \\
$L_h$ & history-window length & $64$ \\
$L_f$ & future-block length & $16$ \\
$s_D$ & delay log-transform scale & $0.05$ \\
$N_s$ & low-criticality sampling stride & $10$ \\
$\hat{N}$ & balanced level-$0$ window cap & $15000$ \\
$N_{\min}$ & minimum window cap per level & $1000$ \\
$N_t$ & transition-window cap per level & $3000$ \\
$T_d$ & diffusion noise-schedule length & $1000$ \\
$T'_d$ & DDIM reverse-sampling steps & $50$ \\
\bottomrule
\end{tabular}
\end{table}

For each load value, the number of MC samples is chosen such that the total number of generated time steps matches the number of generated time steps used by the SMC procedure with $1000$ particles, ensuring equal inference cost across the compared sampling approaches. The first $20$~s of each trajectory are discarded before training and evaluation to remove initialization-dependent transient states. For surrogate evaluation, the physical simulator is used only to initialize the first $L_h$ delay samples. Thereafter, the diffusion DT is rolled out independently by generating future delay blocks, updating the persistence counter and reaction coordinate from its own samples, and continuing until the horizon ends or the stopping condition is met. The diffusion surrogate uses a hidden dimension of $128$, one GRU layer, $64$ U-Net base channels, $100$ training epochs, a batch size of $128$, and a learning rate of $3\times 10^{-4}$. The remaining learning and dataset parameters are specified in Table~\ref{tab:diffusion_dt_params}.

\begin{figure}
    \centering
    \includegraphics[width=0.95\linewidth]{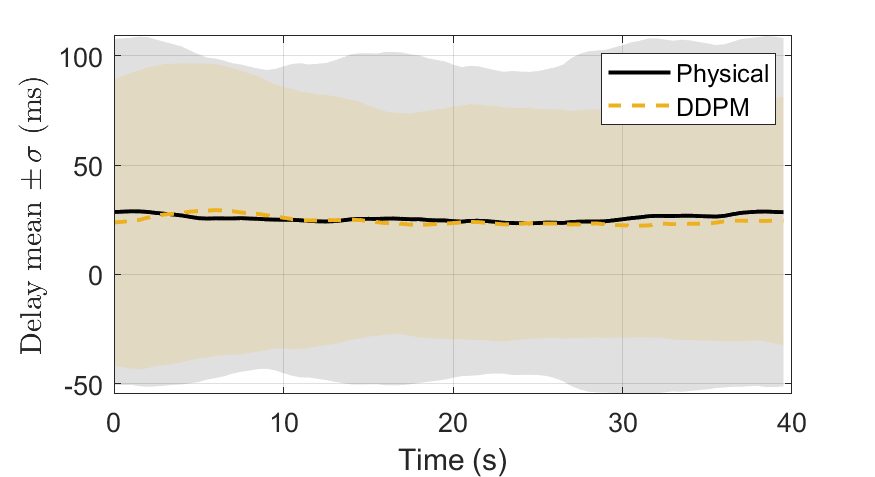}\\
    \includegraphics[width=0.95\linewidth]{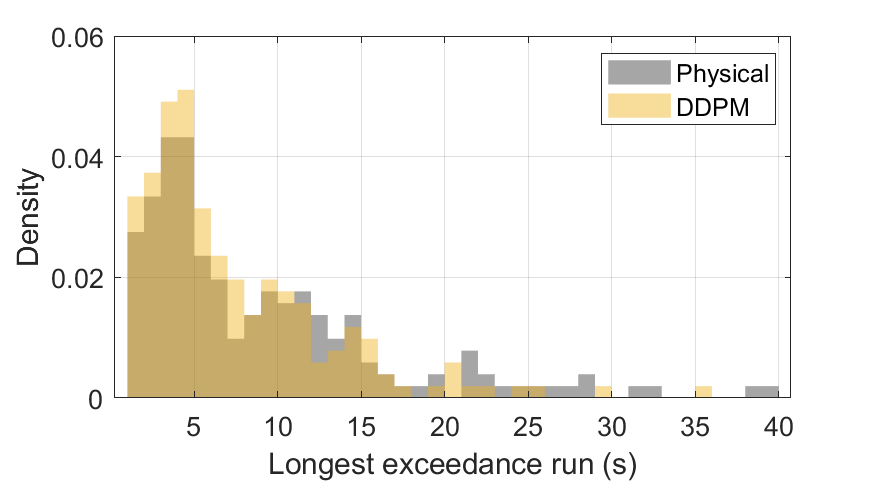}
    \caption{Comparison between physical-model and DDPM-generated delay trajectories for $\Lambda = 0.76$. (a) delay mean and standard deviation over the post-transient evaluation horizon (top) and (b) distribution of the longest threshold-exceedance run (bottom).}
    \label{fig:surrogate_stats}
    \vspace{-4mm}
\end{figure}

Fig.~\ref{fig:surrogate_stats} compares trajectory-level delay statistics obtained from the physical simulator and the DDPM surrogate. The upper panel shows that the surrogate reproduces the temporal evolution of the delay mean and standard deviation over the post-transient horizon. This indicates that, after being initialized with the first $L_h$ physical delay samples, the surrogate can generate delay trajectories without further access to the physical model while preserving the main marginal statistics. The lower panel compares the empirical distribution of the longest continuous threshold-exceedance duration in each trajectory. Note that in this figure, the large peaks of short exceedance runs are removed to better reveal the structure of the rest. This statistic is directly relevant to resilience assessment because failure is also driven by how long the exceedance persists. The DDPM surrogate captures the main shape of this distribution, suggesting that it can represent persistence patterns associated with critical delay events.

\begin{figure}
    \centering
    \includegraphics[width=0.95\linewidth]{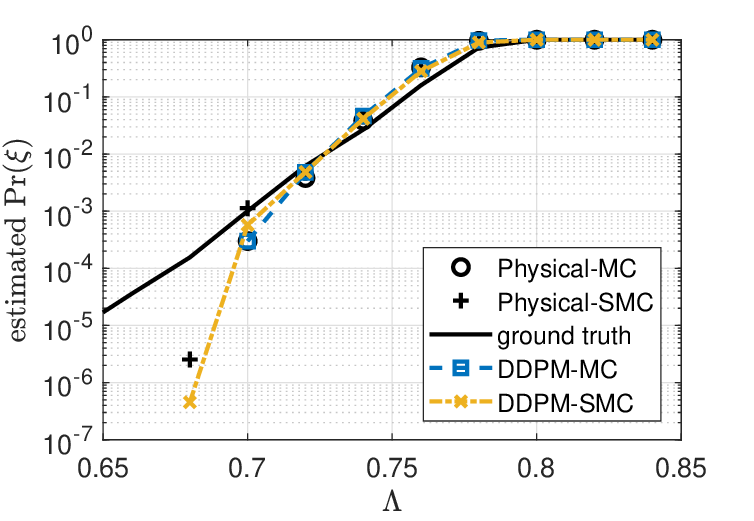} 
    \caption{Estimated non-recovery probability $\Pr(\xi)$ versus the offered traffic rate $\Lambda$ using MC and SMC approaches based on physical and surrogate models. }
    \label{fig:surrogate_results}
    \vspace{-4mm}
\end{figure}

Fig.~\ref{fig:surrogate_results} compares the estimated non-recovery probability obtained from the physical simulator and the DDPM surrogate using MC and SMC sampling. The DDPM estimates closely follow the physical-model baselines across the load range. The benefit of SMC is most visible in the low-probability region, where DDPM-MC is limited by the finite number of generated trajectories, while DDPM-SMC can estimate rarer events by branching from intermediate critical states. This proves that the learned diffusion surrogate can support SMC-style resilience assessment by generating delay continuations.

\section{Conclusions and Outlook}\label{sec:conclude}
%
%This work presented SMC for resilience assessment and control in networked systems. By formulating resilience failure events as path-dependent conditions over staged system dynamics, we showed how SMC enables efficient estimation of rare non-recovery probabilities through a multilevel splitting approach. The use of fixed, semantically meaningful levels allows for an interpretable decomposition of resilience into successive phases, while the proposed budget-adaptive population control mechanism ensures that computational effort is dynamically concentrated on the most critical transitions under a fixed total computation cost. The framework was also extended beyond estimation to support mitigation and control. This leverages SMC's checkpointing structure for policy evaluation and comparison at intermediate resilience levels, and for state-contingent policy selection via simulation-based lookahead.
This paper developed an SMC framework for resilience assessment and control in wireless networks. Resilience failures were modeled as path-dependent rare events associated with staged degradation and non-recovery, while SMC leveraged budget-aware population-control fixed-level splitting to provide both efficient sampling and interpretable stage-wise vulnerability measures. Moreover, SMC checkpoints were reused for mitigation policy evaluation and for local simulation-based policy selection.
The framework was also extended toward data-driven DTs by using generative sequence models as restartable stochastic surrogates and exploiting SMC mechanisms. We considered a delay-critical wireless-network use case, in which rare non-recovery probability estimation, SMC-assisted service reconfiguration, and diffusion-based surrogate simulation were demonstrated. The results showed that SMC can not only significantly outperform naive MC in rare-event regimes, but that this remains effective when the physical simulator is replaced by a learned stochastic surrogate.

%Future work will focus on assessing the potential of the proposed framework for more complex system dynamics and distributed control mechanisms. Also, beyond policy evaluation, we will consider the proposed framework as a data-generation mechanism for learning-based resilience control by exposing near-critical and post-fault states that are effectively unobservable under nominal sampling. Another relevant direction is the integration of data-driven models to approximate or accelerate the underlying simulator, enabling scalable resilience analysis in large-scale ?? systems.
%Investigating the integration of SMC-driven data generation with RL or other adaptive control techniques constitutes another promising direction for future work.

\bibliographystyle{ieeetr}
\bibliography{ref}

% that's all folks
\end{document}